\begin{document}
\doi{10.1080/0003681YYxxxxxxxx}
 \issn{1563-504X}
\issnp{0003-6811}
\jvol{00} \jnum{00} \jyear{2009} \jmonth{July}

\markboth{The discrete potential Boussinesq
equation }{Applicable Analysis}


\title{{\itshape 
The discrete potential Boussinesq
equation and its multisoliton solutions}}

\author{Ken-ichi Maruno$^{\rm a}$$^{\ast}$\thanks{$^\ast$Corresponding author. Email: kmaruno@utpa.edu
\vspace{6pt}} and Kenji Kajiwara$^{\rm b}$\\\vspace{6pt}  $^{\rm
a}${\em{Department of Mathematics, The University of Texas-Pan American, 
Edinburg, TX 78539-2999}}; 
$^{\rm b}${\em{Faculty of Mathematics, Kyushu University,
744 Motooka, Fukuoka 819-0395, Japan}}\\
\vspace{6pt}\received{\today} }

\maketitle

\begin{abstract}

An alternate form of
discrete potential Boussinesq equation is 
proposed and its multisoliton solutions are constructed. 
An ultradiscrete potential Boussinesq equation is 
also obtained from the discrete potential 
Boussinesq equation using the ultradiscretization technique. 
The detail of the multisoliton solutions is 
discussed by using the reduction technique. 
The lattice potential Boussinesq equation 
derived by Nijhoff et al. is also 
investigated by using the singularity confinement test. 
The relation between the proposed alternate discrete 
potential Boussinesq equation and 
the lattice potential Boussinesq equation 
by Nijhoff et al. is clarified. 

\begin{keywords}discrete potential Boussinesq equation; multisoliton solutions; 
bilinear equations 
\end{keywords}
\begin{classcode}35Q51; 37K10 \end{classcode}\bigskip
\end{abstract}

\section{Introduction}
In this article, we propose 
an alternate form of
discrete potential Boussinesq (BSQ) equation 
\begin{equation}
U_{n-1}^{m+2}U_{n-1}^{m-1}
(-\delta_1U_{n}^{m-1}+\delta_2U_{n}^{m+2})
=U_{n}^{m+1}U_{n-1}^{m}
(-\delta_1U_{n-1}^{m-1}+\delta_2U_{n-1}^{m+2})\,,\label{new-pbsq}
\end{equation}
and 
study the relation to 
the lattice potential BSQ equation proposed by Nijhoff et al.
\cite{Nijhoff,Nijhoff2}
\begin{eqnarray}
&&\frac{p^3-q^3}{p-q+u_{n+1}^{m+1}-u_{n+2}^m}
-\frac{p^3-q^3}{p-q+u_{n}^{m+2}-u_{n+1}^{m+1}}\nonumber\\
&&\quad -u_n^{m+1}u_{n+1}^{m+2}+u_{n+1}^{m}u_{n+2}^{m+1}
+u_{n+2}^{m+2}(p-q+u_{n+1}^{m+2}-u_{n+2}^{m+1})\nonumber\\
&&\quad +u_{n}^{m}(p-q+u_{n}^{m+1}-u_{n+1}^{m})\nonumber\\
&&=(2p+q)(u_{n+1}^m+u_{n+1}^{m+2})
-(p+2q)(u_{n}^{m+1}+u_{n+2}^{m+1})\,.\label{pbsq}
\end{eqnarray}
Here $U=U_n^m$ and $u=u_n^m$ are 
the dynamical field 
variables
at the site $(n,m)$
of a rectangular lattice, and $\delta_1$, $\delta_2$, $p$, $q$ 
are the lattice parameters. 
These are integrable discrete analogues of
the potential BSQ equation
\begin{equation}
3w_{tt}+4c_0w_{xx}-6w_xw_{xx}-w_{xxxx}=0
\label{pbsq-cont}\,,
\end{equation}
which leads to the BSQ equation\cite{bsq-paper,ablowitz}
\begin{equation}
3u_{tt}+4c_0u_{xx}-3(u^2)_{xx}-u_{xxxx}=0\,,
\end{equation}
by $u=w_x$. 
Note that Eq.(\ref{pbsq}) is the lowest order member 
of a hierarchy of lattice equations which is  called the lattice
Gelfand-Dikii(GD) hierarchy\cite{Nijhoff}. See also recent works about 
the lattice potential BSQ equation (\ref{pbsq})\cite{Tongas,Hietarinta}. 

In this paper, we propose 
an alternate form of
discrete potential BSQ equation 
and present multisoliton solutions. 
It is shown that 
the multisoliton solution for the discrete potential BSQ equation
can be constructed from one for 
the Hirota-Miwa (discrete KP) equation using the reduction technique. 
Using the proposed discrete potential BSQ equation, 
we can construct the ultradiscrete potential BSQ equation.
We also study the relation between 
the alternate discrete potential BSQ equation (\ref{new-pbsq}) 
and the lattice potential BSQ equation (\ref{pbsq}). 
Bilinear equations of the lattice potential BSQ equation can be
derived systematically using the singularity confinement (SC) test. 
This reveals the relationship with other discrete potential 
BSQ equations. 

\section{Multisoliton solutions for the Boussinesq equation}

First, we review fundamental results about 
reductions of the Kadomtsev-Petviashvili (KP) equation. 

It is well known that the solutions of the KP equation\cite{kp}
\begin{equation}
(
-
4u_t+6uu_x+u_{xxx})_x+3u_{yy}=0\,,
\end{equation}
can be expressed via a tau function $\tau(x,y,t)$ as
$u(x,y,t) = 2 \frac{\partial^2}{\partial x^2}\log \tau (x,y,t)$
where $\tau(x,y,t)$ satisfies Hirota's bilinear equation
\begin{equation}
(D_x^4-4D_xD_t+3D_y^2)\tau \cdot \tau=0\,,\label{bikp}
\end{equation}
where $D_x, D_t, D_y$ are the Hirota $D$-operators.
Soliton solutions of the bilinear 
equation can be written in terms of the Wronskian
determinant \cite{Freeman,FreemanNimmo,Hirota,satsuma} 
\begin{equation}
\tau(x,y,t)=  \mathop{\rm Wr}(f_1,\cdots , f_N)= 
  \det( f_n^{(n'-1)} )_{1\le n,n'\le N}\,,
\label{e:tauWronskian}
\end{equation}
with $f_n^{(j)}=\partial^j f_n/\partial x^j$, 
and where $f_1,\dots,f_N$ are 
a set of linearly independent solutions of the linear system
\begin{equation}
\frac{\partial f}{\partial y}=
\frac{\partial^2 f}{\partial x^2},\,\quad 
\frac{\partial f}{\partial t}=
\frac{\partial^3 f}{\partial x^3}\,. 
\label{e:linearPDEsyst} 
\end{equation}
For example, ordinary $N$-soliton solutions are obtained by taking
$f_n = e^{\theta_{2n-1}} + e^{\theta_{2n}}$ for $n = 1,\dots,N$,
where
\begin{equation}
\theta_m(x,y,t) =k_mx + k_m^2y +k_m^3t + \theta_{m;0}\,,
\label{e:thetadef}
\end{equation}
for $m=1,\dots,2N$,
where the $4N$ parameters
$k_1<\cdots<k_{2N}$ and $\theta_{1;0},\dots,\theta_{2N;0}$ are real constants. 
For $N=1$, one obtains the single-soliton solution of KP equation:
\begin{equation}
u_{i,j}(x,y,t)=
  \frac{1}{2} (k_j-k_i)^2{\rm sech}^2
\big[\frac{1}{2}(\theta_i-\theta_j)\big]\,, 
\label{e:KPIIsoliton}
\end{equation}
where $i=1$ and $j=2$.
The most general form of the $N$-soliton solution is given by
\begin{equation}
\tau(x,y,t)= \det(A\Theta K)=
  \sum_{1\le m_1<\cdots<m_N\le M}
    V_{m_1,\ldots,m_N}\,A_{m_1,\ldots,m_N}\,
  \exp\,\theta_{m_1,\cdots,m_N}\,,
\label{e:taugeneral}
\end{equation}
where 
$A= (a_{n,m})$ is the $N\times M$ coefficient matrix,
$\Theta={\rm diag}(e^{\theta_1},\cdots ,e^{\theta_M})$, 
$\theta_{m_1,\cdots,m_N}=\theta_{m_1}+\cdots +\theta_{m_N}$,
and the $M\times N$ matrix $K$ is given by $K=(k_m^{n-1})$
\cite{jpa2003,jpa2004,jmp2006,mcs2007,prl2007,jpa2008,Chak-Kodama,boiti}.
$V_{m_1,\dots,m_N}$ is the Vandermonde determinant
$V_{m_1,\dots,m_N}= \prod_{1\le j<j'\le N}(k_{m_{j'}}-k_{m_j})\,$,
and $A_{m_1,\ldots,m_N}$ is the $N\times N$-minor whose $n$-th column 
is respectively given by the $m_n$-th column of the coefficient matrix
for $n = 1, \dots, N$.
The only time dependence in the tau function comes from the
exponential phases $\theta_{m_1,\dots,m_N}$.
Also, for all $G\in \rm{GL}(N,\mathbb{R})$,
the coefficient matrices $A$ and $A'= G\,A$
produce the same solution of the KP equation.
Thus without loss of generality one can consider $A$ to be in 
row-reduced echelon form (RREF).
One can also multiply each column of $A$ by 
an arbitrary positive constant
which can be absorbed in the definition of
$\theta_{1;0},\dots,\theta_{M;0}$.

Real nonsingular (positive) solutions of the KP equation are obtained if 
$k_1<\cdots<k_M$ and all minors of $A$ are nonnegative.
Under these assumptions and some fairly general irreducibility conditions 
on the coefficient matrix,
Eq.(\ref{e:taugeneral}) produces $(N_-,N_+)$-soliton solutions of the 
KP equation with $N_-=M-N$ and $N_+=N$, as in the simpler case of fully 
resonant solutions.
Asymptotic line solitons are given by Eq.(\ref{e:KPIIsoliton})
with the indices $i$ and $j$ labeling the phases $\theta_i$
and $\theta_j$ being swapped 
in the transition between two dominant phase combinations
along the line $\theta_i=\theta_j$.
Asymptotic solitons can thus be uniquely characterized by an index pair 
$[i,j]$
with $1\le i<j\le M$. Recently, line soliton solutions of the KP\,II
equation were classified using this formulation
\cite{jpa2003,jpa2004,jmp2006,mcs2007,jpa2008,Chak-Kodama}.
Elastic 2-soliton solutions 
are classified into three classes:
ordinary (O-type), asymmetric (P-type) and resonant (T-type).
The coefficient matrices corresponding to these classes have 
the following RREFs:
\begin{gather}
A_{\rm O}=\left(\!\begin{array}{cccc}
1 & a & 0 & 0 \\
0 & 0 & 1 & b \\
\end{array}\!\right)\!,\!\quad 
A_{\rm P}=\left(\!\begin{array}{cccc}
1 & 0 & 0 & -b \\
0 & 1 & a & 0 \\
\end{array}\!\right)\!,\!\quad 
A_{\rm T}=\left(\!\begin{array}{cccc}
1 & 0 & -c & -d \\
0 & 1 & a & b \\
\end{array}\!\right)\!,
\label{e:Aelastic2soliton}
\end{gather}
where $a,b,c,d>0$ are free parameters with $ad-bc>0$.
These three types of solutions 
cover disjoint sectors of the 2-soliton parameter space of
amplitudes and directions.
Moreover, their interaction properties are also different.
This difference is obvious in the case of T-type solutions,
but also applies to O-type and P-type solutions, since
P-type solutions only exist for unequal amplitude, and 
the interaction phase shift has the opposite sign for
O-type and P-type solutions.
We remark that 
inelastic 2-soliton solutions fall into four categories, but 
we do not focus on them in this article. 

It is known that the KP equation reduces to 
the Korteweg-de Vries (KdV) 
equation
\begin{equation}
-4u_t+6uu_x+u_{xxx}=0
\end{equation}
by the symmetry constraint 
$\partial u/\partial y=0$\cite{jpa2004,Date-reduction,
JM,Hirota:reduction}. 
In the bilinear form, the bilinear KP equation (\ref{bikp})
is reduced to the bilinear KdV equation
$(D_x^4-4D_xD_t)\tau \cdot \tau =0$
by the constraint of omitting terms including $D_y$, 
which is the so-called 2-reduction.
As mentioned in \cite{jpa2004},
this constraint implies that all the solitons of 
the KdV equation are parallel to the $y$-axis. 
Thus we must have a condition $k_j=-k_i$ for each $[i,j]$-soliton. 
From the ordering $k_1<k_2<\cdots<k_{2N}$, we must assume 
\[
k_1<k_2<\cdots<k_{N}<0\,, \qquad k_{N+j}=-k_{N-j+1}\,\, {\rm for}
\,\, j=1,\cdots, N\,. 
\]
This allows only P-type soliton solution in which $A$-matrix is 
\[
A=
\left(\begin{array}{cccccccc}
  1 & 0 & \cdots & 0 & 0 & \cdots & 0 & a_{1,2N} \\
  0 & 1 & \cdots & 0 & 0 & \cdots & a_{2,2N-1} & 0  \\
\vdots & \vdots & \cdots & \vdots & \vdots & \cdots & \vdots &
 \vdots  \\
 0& 0 & \cdots & 1& a_{N,N} & \cdots & 0 & 0
\end{array}\right)\,.
\]

Let us consider multisoliton solutions of the BSQ equation. 
It is also known that the KP equation is reduced to the BSQ equation
without the term $u_{xx}$, namely
\begin{equation}
3\delta_0u_{t't'}-3(u^2)_{xx}-u_{xxxx}=0\,, \label{bsq-no}
\end{equation}
by the symmetry constraint $\partial u/\partial t=0$
\cite{Date-reduction,JM,Hirota:reduction}. 
Here we introduced 
a new independent variable $t'$ 
such that $y=\sqrt{-\delta_0}t'$, $\delta_0=\pm 1$.
In the bilinear form, the bilinear KP equation (\ref{bikp})
is reduced to 
$(D_x^4-3\delta_0D_{t'}^2)\tau \cdot \tau=0$
by the constraint of omitting terms including $D_t$, 
which is the so-called 3-reduction.
This condition implies that all the solitons of Eq.(\ref{bsq-no}) 
are parallel to the $t$-axis. 
Thus we must have a condition $k_j^2+k_jk_i+k_i^2=0$, i.e.
$k_j=\omega k_i$ or $k_j=\omega^2 k_i$ ($\omega=-1/2+{\rm
i}\sqrt{3}/2$, $\omega^2=-1/2-{\rm i}\sqrt{3}/2$, $\omega^3=1$), 
for $[i,j]$-soliton. 
In the BSQ equation, $k_1,...,k_{2N}$ take complex values in general. 
Since $k_1,...,k_{2N}$ are complex values, we cannot consider the ordering of 
$k_1,...,k_{2N}$ which 
was assumed when we consider KP soliton solutions. 
However, from the constraint of $k_j$,
we have a restriction to the $A$-matrix 
such that each row has only 2 or 3 nonzero elements and each column has 
only one element. 
Note that there are only 2 nonzero elements in each row 
and only one element in each column in the case of the KdV equation 
because of the constraint $k_j=-k_{i}$ \cite{jpa2004}. 
Note also that there is no such 
restriction in the case of the KP equation.  
For 2-soliton solutions, 2 types of 2-soliton interactions are
possible, i.e. 2 elastic soliton interactions (O-type and P-type), 
other 2-soliton interactions are impossible because some columns in 
$A$-matrix have 2 elements. However, there is no distinction 
between O-type and P-type solitons because we do not have the 
ordering of $k_1,...,k_{2N}$. Thus 2-soliton interaction of the BSQ
equation is actually only of one type.  
To get real solutions, we must remove imaginary numbers by the following 
ways.

The case of 2 elements:\\ 
(i) Suppose that we have 2 elements in the $i$-th row of the $A$-matrix. 
Let the corresponding wave numbers of these elements be $k_{j_1}$, $k_{j_2}$; 
(ii) Let $k_{j_2}=\omega k_{j_1}$ (or $k_{j_2}=\omega^2 k_{j_1}$);
(iii) Using the gauge invariance of $\tau$-function, each element in the
Wronskian determinant can be 
$f_i\sim 1+a_{i,j_2}
\exp ((\omega-1)k_{j_1}x+\sqrt{-\delta_0}
(\omega^2-1)k_{j_1}^2t'+\theta_{j_2,0})$ 
(or 
$f_i\sim 1+a_{i,j_2}
\exp ((\omega^2-1)k_{j_1}x+\sqrt{-\delta_0}
(\omega-1)k_{j_1}^2t'+\theta_{j_2,0})$); 
(iv) Reparametrize $k_{j_1}$  and $k_{j_2}$ 
by $\kappa_{i}=(\omega-1)k_{j_1}$ and 
$\Omega_i=\sqrt{-\delta_0}
(\omega^2-1)k_{j_1}^2$ 
(or $\kappa_{i}=(\omega^2-1)k_{j_1}$ and 
$\Omega_i=\sqrt{-\delta_0}
(\omega-1)k_{j_1}^2$). 
Then we have 
$f_i\sim 1+a_{i,j_2}
\exp (\kappa_{i} x+\Omega_{i}t'+\theta_{j_2,0})$
with linear dispersion relations 
$\kappa_{i}^4-3\delta_0\Omega_{i}^2=0$; 
(vi) So a set of $f_i\sim 1+a_{i,j_2}
\exp (\kappa_{i} x+\Omega_{i}t'+\theta_{j_2,0})$ with linear dispersion 
relations $\kappa_{i}^4-3\delta_0\Omega_{i}^2=0$ gives 
the real and regular multisoliton solutions.  

The case of 3 elements:\\ 
(i) Suppose that we have 3 elements in the $i$-th row of the $A$-matrix. 
Let the corresponding wave numbers of these elements be $k_{j_1}$,
$k_{j_2}$, 
$k_{j_3}$; 
(ii) Let $k_{j_2}=\omega k_{j_1}$ and $k_{j_3}=\omega^2 k_{j_1}$;
(iii) Using the gauge invariance of $\tau$-function, each element in the
Wronskian determinant can be 
$f_i\sim 1+a_{i,j_2}
\exp ((\omega-1)k_{j_1}x+(\omega^2-1)k_{j_1}^2y+\theta_{j_2,0})
+a_{i,j_3}
\exp ((\omega^2-1)k_{j_1}x+(\omega-1)k_{j_1}^2y+\theta_{j_3,0})$; 
(iv) Reparametrize parameters $k_{j_1}$, $k_{j_2}$ and $k_{j_3}$ 
by $\kappa_{i,2}=k_{j_2}-k_{j_1}$, 
$\kappa_{i,3}=k_{j_3}-k_{j_1}$, 
$\Omega_{i,2}=\sqrt{-\delta_0}(k_{j_2}^2-k_{j_1}^2)$
and 
$\Omega_{i,3}=\sqrt{-\delta_0}(k_{j_3}^2-k_{j_1}^2)$. 
Then we have 
$f_i\sim 1+a_{i,j_2}
\exp (\kappa_{i} x+\Omega_{i}t'+\theta_{j_2,0})
+a_{i,j_3}
\exp (\kappa_{i} x+\Omega_{i}t'+\theta_{j_3,0})$
with linear dispersion relations 
$\kappa_{i,j}^4-3\delta_0\Omega_{i,j}^2=0$
 for $j=1,2$;
For example, consider $N=1$. 
This gives Y-shape soliton resonance interaction. 

The KP equation is reduced to the BSQ equation
\begin{equation}
3\delta_0u_{t't'}+4c_0u_{xx}-3(u^2)_{xx}-u_{xxxx}=0\,,
\end{equation}
by the symmetry constraint 
$\frac{\partial}{\partial t}=c_0\frac{\partial}{\partial x}$. 
Here we introduced a new independent variable $t'$ 
such that $y=\sqrt{-\delta_0}t'$. 
In the bilinear form, 
the bilinear KP equation (\ref{bikp}) is reduced to 
$(D_x^4-4c_0D_x^2-3\delta_0D_{t'}^2)\tau \cdot \tau=0$
by the constraint of replacing $D_t$ by $c_0D_x$, which 
is the so-called 3-pseudo reduction\cite{Hirota:reduction}.
To realize this constraint in the multisoliton solutions, 
we can add the constraint 
$k_j^3-k_i^3=c_0(k_j-k_i)$,
i.e.
$k_j=\frac{1}{2}\left(-k_i\pm \sqrt{4c_0-3k_i^2}\right)$, 
to the soliton solutions of the KP equation. 
Note that $k_j$ can be real if $4c_0>3k_i^2$.
So we can assume the ordering of $k_1<...<k_{2N}$.
From the constraint we have a restriction to the $A$-matrix 
such that each row has only 2 or 3 nonzero elements and each column has 
only one element. 
For 2-soliton solutions, 2 types of 2-soliton interactions are
possible, i.e. 2 elastic soliton interactions (O-type and P-type), 
other 2-soliton interactions are impossible because some columns in 
$A$-matrix have 2 elements.

Let us consider explicit real and regular multisoliton solutions. 
For the case of $\delta_0=-1$, 
(i) Suppose that we have 2 elements in the $i$-th row of the $A$-matrix. 
Let the corresponding wave numbers of these elements be $k_{j_1}$,
$k_{j_2}$; 
(ii) Let $k_{j_2}=\frac{1}{2}
\left(-k_{j_1}\pm \sqrt{4c_0-3k_{j_1}^2}\right) $;
(iii) Each element in the
Wronskian determinant can be 
$f_i\sim a_{i,j_1}
\exp (k_{j_1}x+k_{j_1}^2t'+\theta_{j_1,0})+
a_{i,j_2}
\exp (k_{j_2}x+k_{j_2}^2t'+\theta_{j_2,0})$. This gives 
multisoliton solutions for the BSQ equation. 

If we have 3 elements in the $i$-th row of the $A$-matrix, 
each element in the
Wronskian determinant can be 
$f_i\sim a_{i,j_1}
\exp (k_{j_1}x+k_{j_1}^2t'+\theta_{j_1,0})+
a_{i,j_2}
\exp (k_{j_2}x+k_{j_2}^2t'+\theta_{j_2,0})
+
a_{i,j_3}
\exp (k_{j_3}x+k_{j_3}^2t'+\theta_{j_3,0})$. 
For $N=1$, this gives Y-shape resonant soliton solution.

For the case of $\delta_0=1$, 
(i) Suppose that we have 2 elements in the $i$-th row of the $A$-matrix. 
Let the corresponding wave numbers of these elements be $k_{j_1}$,
$k_{j_2}$; 
(ii) Let $k_{j_2}=\frac{1}{2}
\left(-k_{j_1}\pm \sqrt{4c_0-3k_{j_1}^2}\right)$;
(iii) Using the gauge invariance of $\tau$-function, each element in the
Wronskian determinant can be 
$f_i\sim 1+a_{i,j_2}
\exp ((k_{j_2}-k_{j_1})x+{\rm i}(k_{j_2}^2-k_{j_1}^2)t'+\theta_{j_2,0})$; 
(iv) Reparametrize $k_{j_1}$ and $k_{j_2}$ by 
$\kappa_{i}=k_{j_2}-k_{j_1}$ and 
$\Omega_i={\rm i}(k_{j_2}^2-k_{j_1}^2)$. 
Then we have 
$f_i\sim 1+a_{i,j_2}
\exp (\kappa_{i} x+\Omega_{i}t'+\theta_{j_2,0})$
with linear dispersion relations 
$\kappa_{i}^4-4c_0 \kappa_{i}^2-3\delta_0\Omega_{i}^2=0$.

Next, we consider the case of complex parameters.  
Suppose that a parameter $k_i$ corresponds to the pivot of $A$-matrix.
If $4c_0<3k_i^2$, 
then we have a complex parameter  
$
k_j=\frac{1}{2}\left(-k_i\pm {\rm i}\sqrt{3k_i^2-4c_0}\right)
$
among $k_1,...,k_{2N}$. 
Since some of $k_1,...,k_{2N}$ 
are complex values, we cannot consider the ordering of 
$k_1,...,k_{2N}$ which 
was assumed when we consider KP soliton solutions. 
However, from the constraint we have a restriction to the $A$-matrix 
such that each row has only 2 or 3 nonzero elements and each column has 
only one element. 
For 2-soliton solutions, 2 types of 2-soliton interactions are
possible, i.e. 2 elastic soliton interactions (O-type and P-type), 
other 2-soliton interactions are impossible because some columns in 
$A$-matrix have 2 elements. However, there is no distinction 
between O-type and P-type solitons because we don't have the 
ordering of $k_1,...,k_{2N}$. Thus 2-soliton interaction of the BSQ
equation is actually only one type again.  
To get real solutions, we must remove imaginary numbers by the following 
ways.

The case of 2 elements:\\ 
(i) Suppose that we have 2 elements in the $i$-th row of the $A$-matrix. 
Let the corresponding wave numbers of these elements be $k_{j_1}$, $k_{j_2}$; 
(ii) Let $k_{j_2}=\frac{1}{2}
\left(-k_{j_1}\pm {\rm i}\sqrt{3k_{j_1}^2-4c_0}\right) $;
(iii) Using the gauge invariance of $\tau$-function, each element in the
Wronskian determinant can be 
$f_i\sim 1+a_{i,j_2}
\exp ((k_{j_2}-k_{j_1})x+(k_{j_2}^2-k_{j_1}^2)y+\theta_{j_2,0})$; 
(iv) Reparametrize $k_{j_1}$ and $k_{j_2}$ 
by $\kappa_{i}=k_{j_2}-k_{j_1}$ and 
$\Omega_i=\sqrt{-\delta_0}(k_{j_2}^2-k_{j_1}^2)$. 
Then we have 
$f_i\sim 1+a_{i,j_2}
\exp (\kappa_{i} x+\Omega_{i}t'+\theta_{j_2,0})$
with linear dispersion relations 
$\kappa_{i}^4-4c_0 \kappa_{i}^2-3\delta_0\Omega_{i}^2=0$; 
(vi) So a set of $f_i\sim 1+a_{i,j_2}
\exp (\kappa_{i} x+\Omega_{i}t'+\theta_{j_2,0})$ with linear dispersion 
relations $\kappa_{i}^4-4c_0 \kappa_{i}^2-3\Omega_{i}^2=0$ gives 
the real and regular multisoliton solutions.  

The case of 3 elements:\\ 
(i) Suppose that we have 3 elements in the $i$-th row of the $A$-matrix. 
Let the corresponding wave numbers of these elements be $k_{j_1}$,
$k_{j_2}$, 
$k_{j_3}$; 
(ii) Let $k_{j_2}=\frac{1}{2}
\left(-k_{j_1}+{\rm i}\sqrt{3k_{j_1}^2-4c_0}\right)$ 
and $k_{j_3}=\frac{1}{2}
\left(-k_{j_1}- {\rm i}\sqrt{3k_{j_1}^2-4c_0}\right)$;
(iii) Using the gauge invariance of $\tau$-function, each element in the
Wronskian determinant can be 
$f_i\sim 1+a_{i,j_2}
\exp ((k_{j_2}-k_{j_1})x+\sqrt{-\delta_0}
(k_{j_2}^2-k_{j_1}^2)t'+\theta_{j_2,0})
+a_{i,j_3}
\exp ((k_{j_3}-k_{j_1})x+\sqrt{-\delta_0}
(k_{j_3}^2-k_{j_1}^2)t'+\theta_{j_3,0})$; 
(iv) Reparametrize $k_{j_1}$, $k_{j_2}$ and $k_{j_3}$ 
by $\kappa_{i,2}=k_{j_2}-k_{j_1}$, 
$\kappa_{i,3}=k_{j_3}-k_{j_1}$, 
$\Omega_{i,2}=\sqrt{-\delta_0}(k_{j_2}^2-k_{j_1}^2)$
and 
$\Omega_{i,3}=\sqrt{-\delta_0}(k_{j_3}^2-k_{j_1}^2)$. 
Then we have 
$f_i\sim 1+a_{i,j_2}
\exp (\kappa_{i} x+\Omega_{i}t'+\theta_{j_2,0})
+a_{i,j_3}
\exp (\kappa_{i} x+\Omega_{i}t'+\theta_{j_3,0})$
with linear dispersion relations 
$\kappa_{i,j}^4-4c_0 \kappa_{i,j}^2-3\delta_0\Omega_{i,j}^2=0$
 for $j=1,2$;
For example, consider $N=1$. 
This gives Y-shape soliton resonance interaction. 

\section{Discrete analogues of the potential Boussinesq equation}
\subsection{The discrete potential Boussinesq equation}

Here, we present the main theorem in this article. 

\begin{theorem}\label{theorem1}
The difference-difference equation
\begin{equation}
U_{n-1}^{m+2}U_{n-1}^{m-1}
(-\delta_1U_{n}^{m-1}+\delta_2U_{n}^{m+2})
=U_{n}^{m+1}U_{n-1}^{m}
(-\delta_1U_{n-1}^{m-1}+\delta_2U_{n-1}^{m+2})\,,\label{new-pbsq2}
\end{equation}
where $\delta_1=a_2(a_1-a_3)$ and $\delta_2=a_3(a_1-a_2)$ and $a_1$,
 $a_2$, $a_3$ are arbitrary real constants,
is an integrable discrete analogue of the potential BSQ equation 
(\ref{pbsq-cont}). 

Moreover, the discrete potential BSQ equation 
has multisoliton solutions
\begin{eqnarray}
&&U_n^m=\frac{\tau_{n+1}^{m}}{\tau_{n}^{m}}\,,
\quad \tau_n^m={\rm det}(\mathcal{A}\Theta P)\,,
\end{eqnarray}
where 
$\mathcal{A}= (\alpha_{n,m})$ is the $N\times 2N$ coefficient matrix,
$\Theta={\rm diag}(e^{\theta_1},\cdots ,e^{\theta_{2N}})$, 
$e^{\theta_{j}}=p_{j}^s(1-p_{j} a_1)^{-n}(1-p_{j} a_2)^{-m}$,
and the 
$2N\times N$ matrix $P$ is given by $P=(p_m^{n-1})$.
where the $2N$ parameters
$p_1,...,p_{2N}$ are real constants. 
The $\mathcal{A}$-matrix has a restriction such that 
each row has only 2 or 3 nonzero elements and each column has 
only one element.
In the case having 2 elements $(i,j_1)$ and $(i,j_2)$ 
in the $i$-th row of the $\mathcal{A}$-matrix, 
$p_{j_2}$ must satisfy a reduction condition
\begin{equation}
p_{j_2}=\frac{1}{a_2}
+\frac{(1-a_3p_{j_1})}{2a_3}\pm 
\frac{\sqrt{a_2(1-a_3p_{j_1})(a_2-4a_3+3a_2a_3p_{j_1})
}}{2a_2a_3}\,.
\end{equation}
\end{theorem}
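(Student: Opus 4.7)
The plan is to derive both parts of the theorem from the Hirota--Miwa (discrete KP) bilinear equation by imposing a 3-reduction, mirroring at the discrete level the continuous KP$\to$BSQ reduction described in Section 2. The appearance of three auxiliary parameters $a_1,a_2,a_3$ in $\delta_1 = a_2(a_1-a_3)$ and $\delta_2 = a_3(a_1-a_2)$ is the main clue: it strongly suggests that (\ref{new-pbsq2}) is the rewriting, in the field variable $U_n^m = \tau_{n+1}^m/\tau_n^m$, of a Pl\"ucker-type bilinear identity among shifts of a three-dimensional tau function $\tau(n_1,n_2,n_3)$, evaluated on the two-dimensional slice singled out by the reduction.

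The first technical step is bilinearization. I would substitute $U_n^m = \tau_{n+1}^m/\tau_n^m$ into (\ref{new-pbsq2}), clear denominators, and rearrange the resulting polynomial identity in the $\tau$'s as a sum of trilinear terms of the form $\pm\,\tau_a^b\,\tau_c^d\,\tau_e^f$. The goal is to identify this identity with one, or a linear combination, of the standard Hirota--Miwa bilinear equations; the specific coefficients $a_2(a_1-a_3)$ and $a_3(a_1-a_2)$ point toward taking the difference of two Hirota--Miwa identities written in the parameter pairs $(a_1,a_2)$ and $(a_1,a_3)$ respectively, with the $a_3$-direction subsequently eliminated by the reduction.

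Next, for the multisoliton formula, I would begin with the determinant tau function for the three-directional Hirota--Miwa equation, built from plane-wave factors $e^{\theta_j} = p_j^s(1-p_j a_1)^{-n_1}(1-p_j a_2)^{-n_2}(1-p_j a_3)^{-n_3}$, and impose the 3-reduction that eliminates the $n_3$-dependence of the ratio $\tau_{n+1}^m/\tau_n^m$. Following the template laid out in Section 2 for the continuous 3-reduction, the constraint is that for each row of the coefficient matrix $\mathcal{A}$ the parameters $p_{j_1},p_{j_2}$ (or $p_{j_1},p_{j_2},p_{j_3}$) appearing in that row yield factors $(1-p_j a_3)^{-n_3}$ that combine into a single $j$-independent gauge function of $n_3$. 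For a two-entry row this yields a polynomial equation of degree two in $p_{j_2}$ with coefficients depending on $p_{j_1}$, $a_2$, and $a_3$; the quadratic formula for $p_{j_2}$ in the theorem is the solution of this equation, and the three-entry case amounts to applying the same quadratic condition twice as in the corresponding Y-shape resonant analysis of Section 2.

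Finally, verifying that (\ref{new-pbsq2}) is an integrable discrete analogue of (\ref{pbsq-cont}) is a matter of scaling: set $a_k = \epsilon\alpha_k$, introduce continuous variables through an affine change of $(n,m)$, expand $\log U_n^m = \log(\tau_{n+1}^m/\tau_n^m)$ in powers of $\epsilon$, and match the leading nontrivial order against $w_x$ with $w = -2\partial_x\log\tau$, recovering (\ref{pbsq-cont}). The hard part will be step two: finding precisely the combination of Hirota--Miwa Pl\"ucker identities whose field-variable form reproduces (\ref{new-pbsq2}) with exactly the coefficients $\delta_1 = a_2(a_1-a_3)$ and $\delta_2 = a_3(a_1-a_2)$. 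Once that bilinearization is fixed, the determinantal soliton formula, the explicit reduction condition on $p_{j_2}$, and the continuum limit all follow by specialization from standard discrete KP theory.
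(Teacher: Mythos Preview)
Your overall strategy---Hirota--Miwa plus a 3-reduction, then the determinantal soliton formula with a parameter constraint---is exactly the paper's route. But two of your concrete guesses are off, and they matter because once you make the right choice the ``hard part'' you flag becomes trivial.

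First, the bilinearization does not require a combination or a difference of Pl\"ucker identities. The paper applies to the single Hirota--Miwa equation
\[
a_1(a_2-a_3)\tau(n_1{+}1,n_2,n_3)\tau(n_1,n_2{+}1,n_3{+}1)+\text{cyclic}=0
\]
the specific 3-reduction $\tau(n_1,n_2,n_3)\Bumpeq\tau(n_1,n_2{-}2,n_3{-}1)$ (one $n_3$-shift traded for two backward $n_2$-shifts), which directly yields the two-dimensional bilinear equation
\[
a_1(a_2-a_3)\tau_{n+1}^{m+1}\tau_n^{m}+a_2(a_3-a_1)\tau_n^{m+2}\tau_{n+1}^{m-1}+a_3(a_1-a_2)\tau_n^{m-1}\tau_{n+1}^{m+2}=0.
\]
Substituting $U_n^m=\tau_{n+1}^m/\tau_n^m$ into (\ref{new-pbsq2}) and using this single identity at $n$ and at $n-1$ collapses both sides to the same expression; no further Pl\"ucker combinations are needed. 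Your suggestion of differencing two Hirota--Miwa relations in the pairs $(a_1,a_2)$ and $(a_1,a_3)$ would not produce these coefficients.

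Second, the reduction is not ``make the $(1-p_ja_3)^{-n_3}$ factors gauge out by themselves.'' Because a shift in $n_3$ equals two shifts in $n_2$, the row-wise constraint is
\[
(1-p_{j_1}a_2)^2(1-p_{j_1}a_3)=(1-p_{j_2}a_2)^2(1-p_{j_2}a_3),
\]
which is the quadratic in $p_{j_2}$ whose solution is the formula in the theorem. If you try to impose a condition involving only $a_3$, you will not get the stated $p_{j_2}$ and the reduced $\tau$ will not satisfy the bilinear equation above. (Incidentally, the paper does not carry out the continuum limit inside this proof; it simply declares the bilinear equation to be the discrete potential BSQ bilinear form.)
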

\begin{proof}
The Hirota-Miwa (discrete KP) equation is written as
\begin{eqnarray}
&&\quad a_1(a_2-a_3)\tau(n_1+1,n_2,n_3) \tau(n_1,n_2+1,n_3+1)\nonumber\\
&&+a_2(a_3-a_1)\tau(n_1,n_2+1,n_3) \tau(n_1+1,n_2,n_3+1)\nonumber\\
&&+a_3(a_1-a_2)\tau(n_1,n_2,n_3+1) \tau(n_1+1,n_2+1,n_3)=0 \,,
\label{eqn:a9}
\end{eqnarray}
where $\tau$ depends on three discrete independent variables $n_1$, $n_2$
and $n_3$, and $a_1$, $a_2$ and $a_3$ are
the difference intervals for $n_1$,$n_2$ and $n_3$, respectively
\cite{Miwa}.

The Casorati determinant solution for the Hirota-Miwa equation
(\ref{eqn:a9}) is as follows\cite{Ohta}:
\begin{eqnarray}
\small
&& \tau (n_1,n_2,n_3)
={\rm det}(\psi_i(n_1,n_2,n_3;s+j-1))_{1\leq i,j \leq N}\,,
\end{eqnarray}
where $\psi_1,...,\psi_N$ are 
a set of linearly independent solutions of the linear system 
\begin{equation}
 \Delta _{n_j}\psi_i(n_1,n_2,n_3;s)=\psi_i(n_1,n_2,n_3;s+1)\,, \quad
(j=1,2,3).\nonumber\label{dispersion}
\end{equation}
Here $\Delta_{n_j}$ are the backward difference operators:
\begin{equation}
  \Delta _{n_j}f(n_j)\equiv \frac{f(n_j)-f(n_j-1)}{a_j}\,, \quad (j=1,2,3).
\end{equation}
For example, ordinary $N$-soliton solutions are obtained by 
taking
\begin{eqnarray}
\psi_i(n_1,n_2,n_3;s)&=&\alpha_{2i-1}
p_{2i-1}^s(1-p_{2i-1} a_1)^{-n_1}(1-p_{2i-1} a_2)^{-n_2}
(1-p_{2i-1} a_3)^{-n_3}\nonumber\\
&\quad &+\alpha_{2i}p_{2i}^s(1-p_{2i} a_1)^{-n_1}(1-p_{2i} a_2)^{-n_2}
(1-p_{2i} a_3)^{-n_3}\,, \label{eqn:b10}
\end{eqnarray}
for $n=1,...,N$ where the $4N$ parameters 
$p_{1}<\cdots<p_{2N}$ and $\alpha_1,\cdots,\alpha_{2N}$ 
are positive real constants. 
The most general form of the $N$-soliton solution is given by
\begin{equation}
\tau(n_1,n_2,n_3)
= \det(\mathcal{A} \Theta P)=
  \sum_{1\le m_1<\cdots<m_N\le 2N}
    V_{m_1,\ldots,m_N}\,\mathcal{A}_{m_1,\ldots,m_N}\,
  \exp\,\theta_{m_1,\cdots,m_N}\,,
\label{e:taugeneral}
\end{equation}
where 
$\mathcal{A}= (\alpha_{n,m})$ is the $N\times 2N$ 
coefficient matrix,
$\Theta={\rm diag}(e^{\theta_1},\cdots ,
{e^{\theta_{2N}}}
)$, 
$e^{\theta_{j}}
=p_{j}^s(1-p_{j} a_1)^{-n_1}(1-p_{j} a_2)^{-n_2}(1-p_{j} a_3)^{-n_3}$,
and the $2N\times N$ matrix $P$ is given by $P=(p_m^{n-1})$.
$V_{m_1,\dots,m_N}$ is the Vandermonde determinant
$V_{m_1,\dots,m_N}= \prod_{1\le j<j'\le N}(p_{m_{j'}}-p_{m_j})\,$,
and $\mathcal{A}_{m_1,\ldots,m_N}$ is the $N\times N$-minor whose $n$-th column 
is respectively given by the $m_n$-th column of the coefficient matrix
for $n = 1, \dots, N$.
For all $G\in \rm{GL}(N,\mathbb{R})$,
the coefficient matrices $\mathcal{A}$ 
and $\mathcal{A}'= G\,\mathcal{A}$
produce the same solution of the Hirota-Miwa equation.
Thus without loss of generality one can consider $\mathcal{A}$ to be in 
RREF.

Let us consider the 3-reduction condition
\begin{equation}
\tau(n_1,n_2,n_3)\Bumpeq \tau(n_1,n_2-2,n_3-1)\,,
\end{equation}
for arbitrary $n_1,n_2$ and $n_3$
Applying this reduction condition, we can omit the 
dependence on $n_3$ 
and obtain the bilinear form 
\begin{eqnarray}
&&a_1(a_2-a_3)\tau(n_1+1,n_2) \tau(n_1,n_2-1)
+a_2(a_3-a_1)\tau(n_1,n_2+1) \tau(n_1+1,n_2-2)\nonumber\\
&&\quad +a_3(a_1-a_2)\tau(n_1,n_2-2) \tau(n_1+1,n_2+1)=0\,.
\end{eqnarray}
After the change of variables 
$n_1\to n$, $n_2\to m$, $\tau(n_1,n_2)\to \tau_n^m$, 
we obtain
\begin{eqnarray}
&&a_1(a_2-a_3)\tau_{n+1}^{m+1}\tau_n^{m}
+a_2(a_3-a_1)\tau_n^{m+2}\tau_{n+1}^{m-1}
+a_3(a_1-a_2)\tau_n^{m-1}\tau_{n+1}^{m+2}=0\,,
\nonumber\\
\end{eqnarray}
which is the bilinear form of the discrete potential BSQ equation. 

Now we impose a constraint on the parameters of the solution
so that the reduction condition is satisfied.
For simplicity, we consider the case in which $\psi_1,...,\psi_{N}$ 
have 2 terms. Then we observe
\small
\begin{eqnarray}
&&\psi_i(n_1,n_2+2,n_3+1;s)\nonumber\\
  &&=p_{j_1}^s(1-p_{j_1} a_1)^{-n_1}
(1-p_{j_1} a_2)^{-n_2-2}(1-p_{j_1} a_3)^{-n_3-1}\nonumber\\
&&\quad +p_{j_2}^s(1-p_{j_2} a_1)^{-n_1}(1-p_{j_2} a_2)^{-n_2-2}
(1-p_{j_2}a_3)^{-n_3-1}\nonumber\\
  &&=p_{j_1}^s(1-p_{j_1} a_1)^{-n_1}(1-p_{j_1} a_2)^{-n_2-2}
(1-p_{j_1}a_3)^{-n_3-1}
\nonumber\\
&&
\times\left[ 1+C_i
\left(\frac{p_{j_2}}{p_{j_1}}\right)^s
\left(\frac{1-p_{j_2}a_1}{1-p_{j_1}a_1}\right)^{-n_1}
\left(\frac{1-p_{j_2}a_2}{1-p_{j_1}a_2}\right)^{-n_2}
\left(\frac{1-p_{j_2}a_3}{1-p_{j_1}a_3}\right)^{-n_3}
\right]\,,\nonumber\\
\end{eqnarray}
where 
\[
C_i=
\left(\frac{1-p_{j_2}a_2}{1-p_{j_1}a_2}\right)^{-2}
\left(\frac{1-p_{j_2}a_3}{1-p_{j_1}a_3}\right)^{-1} \,.
\]
\normalsize
If we apply the reduction condition
\begin{equation}
(1-p_{j_1}a_2)^2(1-p_{j_1}a_3)=(1-p_{j_2}a_2)^2(1-p_{j_2}a_3)\,,
\end{equation}
i.e. 
\begin{equation}
p_{j_2}=\frac{1}{a_2}
+\frac{(1-a_3p_{j_1})}{2a_3}\pm 
\frac{\sqrt{a_2(1-a_3p_{j_1})(a_2-4a_3+3a_2a_3p_{j_1})
}}{2a_2a_3}\,,
\end{equation}
we obtain
\begin{equation}
\psi_i(n_1,n_2,n_3;s)=(1-p_{j_2}a_2)^{-2}(1-p_{j_2}a_3)^{-1}
\psi_i(n_1,n_2-2,n_3-1;s)\,,
\end{equation}
Finally we have a reduction condition
\begin{eqnarray}
\tau(n_1,n_2,n_3)
=\prod_{k=1}^N(1-p_ka_2)^{-2}(1-p_k a_3)^{-1}
\tau(n_1,n_2-2,n_3-1)\,.\nonumber\\
\end{eqnarray}

In a similar way discussed in section 2, 
we can consider the general multisoliton solutions of 
the discrete potential BSQ equation. 
\end{proof}

The reality condition is 
$
p_{j_1}<1/a_3\,,\, p_{j_1}>(-a_2+4a_3)/(3a_2a_3) 
$
for $a_2<a_3$, and 
$
p_{j_1}>1/a_3\,,\, p_{j_1}<(-a_2+4a_3)/(3a_2a_3)
$
for $a_2>a_3$. 
With parameters satisfying these conditions, we can 
construct real and regular multisoliton solutions using the 
formula in Theorem \ref{theorem1}. 

Otherwise, we must use the technique of reparametrization which 
was used in section 2. The procedure is as follows. 
(i) Suppose that we have 2 elements in the $i$-th row of the $A$-matrix. 
Let the corresponding complex wave numbers of these elements be $p_{j_1}$ and 
$p_{j_2}$; 
(iii) Using the gauge invariance of $\tau$-function, each element in the
Wronskian determinant can be 
$f_i\sim 1+\alpha_{i,j_2}
K_i^{-n}\Omega_i^{-m}$ where 
$K_i=(1-a_1p_{j_2})/(1-a_1p_{j_1})$, 
$\Omega_i=(1-a_2p_{j_2})/(1-a_2p_{j_1})$; 
(iv) Choose new parameters $K_i$ and $\Omega_i$ to be real numbers. 

It is easy to take the ultradiscrete limit in Eq.(\ref{new-pbsq2}). 
\begin{theorem}
The ultradiscrete potential BSQ equation is 
\begin{eqnarray}
&&V_{n-1}^{m+2}+V_{n-1}^{m-1}+L_n^{m+2}
=V_{n}^{m+1}+V_{n-1}^{m}+L_{n-1}^{m+2}\,,\\
&&
V_n^{m+2}=\max (L_n^{m+2},V_n^{m-1}+c_1)-c_2\,.
\end{eqnarray} 
\end{theorem}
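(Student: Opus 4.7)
The plan is to derive the ultradiscrete system by first splitting the multiplicative discrete equation (\ref{new-pbsq2}) into a pair of subtraction-free equations via an auxiliary variable, then applying the standard ultradiscretization substitution $U_n^m = e^{V_n^m/\varepsilon}$, $\delta_i = e^{c_i/\varepsilon}$ and taking the $\varepsilon \to 0^+$ limit using the tropical identity $\varepsilon \log(e^{A/\varepsilon} + e^{B/\varepsilon}) \to \max(A,B)$.

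First I would introduce an auxiliary quantity $\ell_n^{m+2} := -\delta_1 U_n^{m-1} + \delta_2 U_n^{m+2}$; rewriting this as $\delta_2 U_n^{m+2} = \ell_n^{m+2} + \delta_1 U_n^{m-1}$ makes the relation subtraction-free. Substituting $\ell$ into (\ref{new-pbsq2}) collapses the original equation into the clean multiplicative relation $U_{n-1}^{m+2} U_{n-1}^{m-1} \ell_n^{m+2} = U_n^{m+1} U_{n-1}^m \ell_{n-1}^{m+2}$, so the discrete potential BSQ equation is equivalent to the coupled pair
\[
U_{n-1}^{m+2} U_{n-1}^{m-1} \ell_n^{m+2} = U_n^{m+1} U_{n-1}^m \ell_{n-1}^{m+2}, \qquad \delta_2 U_n^{m+2} = \ell_n^{m+2} + \delta_1 U_n^{m-1}.
\]

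Next, assuming the parameter regime $\delta_1,\delta_2 > 0$ and initial data chosen so that $U_n^m$ and $\ell_n^{m+2}$ remain positive, I would set $U_n^m = e^{V_n^m/\varepsilon}$, $\ell_n^{m+2} = e^{L_n^{m+2}/\varepsilon}$, and $\delta_i = e^{c_i/\varepsilon}$ for $i=1,2$. Taking $\varepsilon \log$ of the first equation gives the tropical-linear relation
\[
V_{n-1}^{m+2} + V_{n-1}^{m-1} + L_n^{m+2} = V_n^{m+1} + V_{n-1}^m + L_{n-1}^{m+2},
\]
directly from additivity of the logarithm on the monomials. For the second equation, the relation $e^{(c_2 + V_n^{m+2})/\varepsilon} = e^{L_n^{m+2}/\varepsilon} + e^{(c_1 + V_n^{m-1})/\varepsilon}$, combined with the tropical limit, yields $c_2 + V_n^{m+2} = \max(L_n^{m+2},\, c_1 + V_n^{m-1})$, which is precisely the stated formula for $V_n^{m+2}$.

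The principal obstacle is maintaining positivity throughout the limiting procedure. Ultradiscretization via $\varepsilon\log$ is valid only when applied to subtraction-free expressions; this is exactly why the $\ell$-splitting is unavoidable, since the only sign-mixed combination in (\ref{new-pbsq2}) is thereby isolated into a subtraction-free defining relation. One must therefore fix parameter and initial-data conventions ensuring that $\ell_n^{m+2}>0$ for all $(n,m)$ of interest, so that $L_n^{m+2} = \varepsilon\log\ell_n^{m+2}$ remains a well-defined real quantity in the $\varepsilon \to 0^+$ limit. Once positivity is secured, the remainder is a routine application of the standard ultradiscretization lemma.
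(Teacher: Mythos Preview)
Your proof is correct and follows essentially the same route as the paper: introduce the auxiliary variable $I_n^{m+2}=-\delta_1 U_n^{m-1}+\delta_2 U_n^{m+2}$ (your $\ell_n^{m+2}$), split Eq.~(\ref{new-pbsq2}) into the multiplicative relation and the defining relation, rearrange the latter as $\delta_2 U_n^{m+2}=I_n^{m+2}+\delta_1 U_n^{m-1}$ to achieve subtraction-freeness, and then apply the standard ultradiscretization substitution and tropical limit. Your added discussion of the positivity hypothesis is a welcome clarification, but the argument is otherwise identical to the paper's.
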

\begin{proof}
Use the standard procedure of ultradiscretization. 
Equation (\ref{new-pbsq2}) is rewritten in the form of 
\begin{eqnarray}
&&U_{n-1}^{m+2}U_{n-1}^{m-1}
I_n^{m+2}
=U_{n}^{m+1}U_{n-1}^{m}
I_{n-1}^{m+2}\,,\\
&&I_n^{m+2}=-\delta_1U_{n}^{m-1}+\delta_2U_{n}^{m+2}\,.
\end{eqnarray}
Introduce new variables $U_n^m=\exp(V_n^m/\epsilon)$, 
$I_n^m=\exp(L_n^m/\epsilon)$, $\delta_1=\exp(c_1/\epsilon)$, 
$\delta_2=\exp(c_2/\epsilon)$. 
Then take the limit 
$\epsilon\to 0^+$ using the formula 
$\lim_{\epsilon \to 0^+}\epsilon\ln(\exp(A/\epsilon)+\exp(B/\epsilon))
=\max(A,B)$ for $A, B \in \mathbb{R}_{\geq 0}$\cite{tokihiro}. 
\end{proof} 
{\bf Remark 1:}\\
Applying the gauge transformation $U_n^m\to (-1)^{n+m}U_n^m$ 
to Eq.(\ref{new-pbsq2}), we have 
\begin{eqnarray}
&&U_{n-1}^{m+2}U_{n-1}^{m-1}
I_n^{m+2}
=U_{n}^{m+1}U_{n-1}^{m}
I_{n-1}^{m+2}\,,\\
&&I_n^{m+2}=\delta_1U_{n}^{m-1}+\delta_2U_{n}^{m+2}\,.
\label{new-pbsq3}
\end{eqnarray}
This form gives a more symmetric form of 
the ultradiscrete BSQ equation
\begin{eqnarray}
&&V_{n-1}^{m+2}+V_{n-1}^{m-1}+L_n^{m+2}
 =V_{n}^{m+1}+V_{n-1}^{m}+L_{n-1}^{m+2}\,,\\
&&L_n^{m+2}=\max(V_n^{m-1}+c_1,V_n^{m+2}+c_2)\,.
\end{eqnarray} 

{\bf Remark 2:}\\
Date et al. proposed another discrete potential BSQ equation\cite{Date-bsq}
\begin{eqnarray}
&&v_{n-1}^{m}v_{n}^{m-1}
(a_{1}(a_2-a_3)v_{n+1}^{m}+a_{2}(a_3-a_1)v_{n}^{m+1})\nonumber\\
&&\quad =v_{n-1}^{m-1}v_{n+1}^{m+1}
(a_{1}(a_2-a_3)v_{n}^{m-1}+a_{2}(a_3-a_1)v_{n-1}^{m})\,.\label{date-pbsq}
\end{eqnarray}
By the transformation $v_n^m=\tau_{n+1}^{m+1}/\tau_n^m$, 
we obtain a bilinear equation
\begin{eqnarray}
&&a_1(a_2-a_3)\tau_{n+1}^{m}\tau_{n-1}^{m}
+a_2(a_3-a_1)\tau_n^{m+1}\tau_{n}^{m-1}
+a_3(a_1-a_2)\tau_{n-1}^{m-1}\tau_{n+1}^{m+1}=0\,.
\nonumber\\\label{date-bilinear}
\end{eqnarray}
This bilinear equation is obtained by adding the reduction condition
\begin{equation}
\tau(n_1,n_2,n_3)\Bumpeq \, \tau(n_1-1,n_2-1,n_3-1)\,,
\end{equation}
which gives yet another 3-reduction. 

For this discrete potential BSQ equation, we can also make an 
ultradiscrete analogue of the potential BSQ equation
\begin{eqnarray}
&&X_{n-1}^m+X_n^{m-1}+Y_n^{m+1}=X_{n-1}^{m-1}+X_{n+1}^{m+1}+Y_{n-1}^m\,,\\
&&Y_n^m=\max(X_{n+1}^{m-1}+c_1,X_n^m+c_2)\,,
\end{eqnarray}
taking the ultradiscrete limit after setting 
$v_n^m=\exp(X_n^m/\epsilon)$, $w_n^m=\exp(Y_n^m/\epsilon)$, 
$\alpha_1=\exp(c_1/\epsilon)$, $\alpha_2=\exp(c_2/\epsilon)$ where 
$w_n^m=\alpha_1v_{n+1}^{m-1}+\alpha_2 v_n^m$, $\alpha_1=a_1(a_2-a_3)$, 
$\alpha_2=a_2(a_3-a_1)$.

\subsection{The lattice potential Boussinesq equation}

Singularity confinement (SC) test was proposed by Grammaticos et al. 
as a detector of integrability in discrete systems\cite{GRP:SC}.
This property has been applied to several problems
\cite{RGH:dP,PGR:accel,RGS:dP,MKNO,MKO}.
The SC test is also powerful tool for constructing solutions for discrete 
integrable systems\cite{RGS:dP,MKNO,MKO}. In this section, 
we apply the SC test to the lattice potential BSQ equation 
of Nijhoff et al. and 
obtain bilinear equations using the result of SC test.

We start from the slightly simplified form of the 
lattice potential BSQ equation
\begin{eqnarray}
&&\frac{p^3-q^3}{p-q+u_{n+1}^{m+1}-u_{n+2}^m}
-\frac{p^3-q^3}{p-q+u_{n}^{m+2}-u_{n+1}^{m+1}}\nonumber\\
&&\quad =
(p+2q+
u_{n+1}^{m}-u_{n+2}^{m+2})
(p-q+u_{n+1}^{m+2}-u_{n+2}^{m+1})\nonumber\\
&&\qquad -(p+2q+u_{n}^m-u_{n+1}^{m+2})
(p-q+u_{n}^{m+1}-u_{n+1}^{m})\label{pbsq1} \,.
\end{eqnarray}

Introducing new variables
\begin{eqnarray}
&&I_n^m=p-q+u_{n-1}^{m+1}-u_n^m \,,
\quad V_n^m=p+2q+u_{n-1}^{m-2}-u_n^m \,,\label{ivtrans}
\end{eqnarray}
Eq.(\ref{pbsq1}) is written as 
\begin{eqnarray}
&&\frac{p^3-q^3}{I_{n+2}^m}
-\frac{p^3-q^3}{I_{n+1}^{m+1}}
=V_{n+2}^{m+2}I_{n+2}^{m+1}
-V_{n+1}^{m+2}I_{n+1}^m \,,\label{ivpbsq1}\\
&&I_n^m+V_{n+1}^{m+2}=I_{n+1}^{m+2}+V_n^{m+3}\,.\label{ivpbsq2}
\end{eqnarray}
After the transformation of the independent variables 
$m+n\to m$,
Eqs.(\ref{pbsq1}),(\ref{ivtrans}) and (\ref{ivpbsq1}), (\ref{ivpbsq2}) 
are rewritten in 
the following form: 
\begin{eqnarray}
&&\frac{p^3-q^3}{p-q+u_{n+1}^{m+2}-u_{n+2}^{m+2}}
-\frac{p^3-q^3}{p-q+u_{n}^{m+2}-u_{n+1}^{m+2}}\nonumber\\
&&\quad =
(p+2q+
u_{n+1}^{m+1}-u_{n+2}^{m+4})
(p-q+u_{n+1}^{m+3}-u_{n+2}^{m+3})\nonumber\\
&&\quad -(p+2q+u_{n}^m-u_{n+1}^{m+3})
(p-q+u_{n}^{m+1}-u_{n+1}^{m+1}) \,,
\end{eqnarray}
\begin{equation}
I_n^m=p-q+u_{n-1}^{m}-u_n^m,\quad V_n^m=p+2q+u_{n-1}^{m-3}-u_n^m \,,
\label{IV-trans}
\end{equation}
and 
\begin{eqnarray}
&&\frac{p^3-q^3}{I_{n+2}^{m+2}}
-\frac{p^3-q^3}{I_{n+1}^{m+2}}
=
V_{n+2}^{m+4}I_{n+2}^{m+3}
-V_{n+1}^{m+3}I_{n+1}^{m+1} \,,\\
&&I_n^m+V_{n+1}^{m+3}=I_{n+1}^{m+3}+V_n^{m+3}\,.
\end{eqnarray}
After some calculation,
we get the form which is suitable to perform the SC test:
\begin{eqnarray}
&&V_n^m=\frac{1}{I_n^{m-1}}\left(V_{n-1}^{m-1}I_{n-1}^{m-3}
+\frac{p^3-q^3}{I_n^{m-2}}-\frac{p^3-q^3}{I_{n-1}^{m-2}}\right) \,,
\label{scform1}\\
&&I_n^m=I_{n-1}^{m-3}+V_n^m-V_{n-1}^m \,.\label{scform2}
\end{eqnarray}

Now we perform the SC test. 
Suppose that we have some initial data which 
can be evolved using the above system.  
Let us assume that during the successive 
applications of the above system the value of 
$I$ at $(m-1,n)$ becomes zero. 
The point where this occurs depends on the initial data. 
Then we have the following pattern of $0$ and $\infty$: 
\begin{eqnarray*}
&&\{I_n^{m-1},I_n^{m},I_{n+1}^{m},I_{n+1}^{m+1}\}
\to \{0,\infty,\infty,0\}\,,
\quad \{V_n^m,V_{n+1}^{m+3}\}\to \{\infty,\infty \}\,.
\end{eqnarray*}
One can see nonzero finite values for all dependent variables 
in further steps, 
so the singularity is perfectly confined. 
Suppose that this pattern was created by a function $F_n^m$ 
which has a zero at $(m,n)$. \\

Using the above singularity
pattern, we obtain the independent variable transformation 
\begin{eqnarray}
I_n^m=\alpha \frac{F_n^{m+1} F_{n-1}^{m-1}}{F_n^{m}F_{n-1}^{m}}\,.
\label{I-transform}
\end{eqnarray}
Since $u_n^m$ and $I_n^m$ are related by Eq.(\ref{IV-trans}), 
$u_n^m$ should have $F_n^m$ in the denominator. 
Thus we set 
\begin{equation}
u_n^m=\beta\frac{G_n^{m}}{F_n^{m}}\,.\label{u-transform}
\end{equation}  
From Eqs.(\ref{ivtrans}) and (\ref{u-transform}), 
we obtain
\begin{eqnarray}
I_n^m=p-q+\beta \frac{G_{n-1}^{m}}{F_{n-1}^{m}}
-\beta \frac{G_n^m}{F_n^m}\, ,\quad
V_n^m=p+2q+\beta\frac{G_{n-1}^{m-3}}{F_{n-1}^{m-3}}
-\beta \frac{G_n^m}{F_n^m}\,.\label{pbsqtrans2}
\end{eqnarray}
Using Eqs.(\ref{scform1}), (\ref{scform2}), (\ref{I-transform})
and (\ref{pbsqtrans2}), we obtain
the following equations:
\begin{eqnarray}
&&(p-q)F_{n-1}^{m-1}F_n^{m-1}
+\beta G_{n-1}^{m-1}F_n^{m-1}
-\beta F_{n-1}^{m-1}G_n^{m-1}
=\alpha F_n^m F_{n-1}^{m-2}\, ,\label{bisa1}\\
&&\frac{p^3-q^3}{\alpha}F_n^{m+1}F_{n+1}^{m}
-\alpha (p+2q)F_{n+1}^{m+2}F_n^{m-1}\nonumber\\
&&\quad -\alpha \beta G_n^{m-1}F_{n+1}^{m+2}
+\alpha \beta F_n^{m-1}G_{n+1}^{m+2}
= \gamma F_n^mF_{n+1}^{m+1}\, ,\label{bisa2}\\
&&\alpha \frac{F_n^m F_{n-1}^{m-2}}
{F_n^{m-1}F_{n-1}^{m-1}}
-\frac{p^3-q^3}{\alpha^2}
\frac{F_{n-1}^{m+1}F_{n}^{m}}
{F_{n}^{m+2}F_{n-1}^{m-1}}
+\frac{\gamma}{\alpha}
\frac{F_{n-1}^{m}F_{n}^{m+1}}
{F_{n}^{m+2}F_{n-1}^{m-1}}\nonumber\\
&&=
\alpha \frac{F_{n+1}^{m+3} F_{n}^{m+1}}
{F_{n+1}^{m+2}F_{n}^{m+2}}
-\frac{p^3-q^3}{\alpha^2}
\frac{F_{n}^{m+1}F_{n+1}^{m}}
{F_{n+1}^{m+2}F_{n}^{m-1}}
+\frac{\gamma}{\alpha}
\frac{F_{n}^{m}F_{n+1}^{m+1}}
{F_{n+1}^{m+2}F_{n}^{m-1}}\,,\label{sabun}
\end{eqnarray}
where $\gamma$ is a decoupling constant. 
Note that $V$ is written in the following form:
\begin{equation}
V_n^m=(p+2q)+\beta \frac{G_{n-1}^{m-2}}
{F_{n-1}^{m-2}}
-\beta  \frac{G_n^{m+1}}{F_n^{m+1}}
=\frac{p^3-q^3}{\alpha^2}
\frac{F_{n-1}^{m}F_{n}^{m-1}}
{F_{n}^{m+1}F_{n-1}^{m-2}}
-\frac{\gamma}{\alpha}
\frac{F_{n-1}^{m-1}F_{n}^{m}}
{F_{n}^{m+1}F_{n-1}^{m-2}}\,.
\end{equation}
Assuming $\gamma=0$, we obtain
\begin{eqnarray}
&&(p-q)F_{n-1}^{m-1}F_n^{m-1}
+\beta G_{n-1}^{m-1}F_n^{m-1}
-\beta F_{n-1}^{m-1}G_n^{m-1}
=\alpha F_n^m F_{n-1}^{m-2}\,,\\
&&\frac{p^3-q^3}{\alpha}F_n^{m+1}F_{n+1}^{m}
-\alpha (p+2q)F_{n+1}^{m+2}F_n^{m-1}
-\alpha \beta G_n^{m-1}F_{n+1}^{m+2}\nonumber\\
&&\qquad +\alpha \beta F_n^{m-1}G_{n+1}^{m+2}
=0 \,,\\
&&
\alpha F_{n}^{m-1} F_{n+1}^{m+3}-\frac{p^3-q^3}{\alpha^2}
F_{n}^{m+2} F_{n+1}^{m}
=\delta F_{n}^mF_{n+1}^{m+2}\,,
\end{eqnarray}
where $\delta$ is a decoupling constant. 
After changing back to original independent variables ($m-n\to m$), 
we have 
\begin{eqnarray}
&&(p-q)F_{n-1}^{m}F_n^{m-1}
+\beta G_{n-1}^{m}F_n^{m-1}
-\beta F_{n-1}^{m}G_n^{m-1}
=\alpha F_n^m F_{n-1}^{m-1}\,,\label{pbsqbi0}\\
&&\frac{p^3-q^3}{\alpha}F_n^{m+1}F_{n+1}^{m-1}
-\alpha (p+2q)F_{n+1}^{m+1}F_n^{m-1}
-\alpha \beta G_n^{m-1}F_{n+1}^{m+1}\nonumber\\
&&\qquad +\alpha \beta F_n^{m-1}G_{n+1}^{m+1}
=0 \,,\label{pbsqbi1}\\
&&
\alpha F_{n}^{m-1} F_{n+1}^{m+2}-\frac{p^3-q^3}{\alpha^2}
F_{n}^{m+2} F_{n+1}^{m-1}
=\delta F_{n}^mF_{n+1}^{m+1}\,.\label{pbsqbi2}
\end{eqnarray}
Note that Eq.(\ref{pbsqbi2}) can be 
derived by eliminating $G$ from Eqs.(\ref{pbsqbi0})
and (\ref{pbsqbi1}).
This is a discrete analogue of a bilinear form of the potential
BSQ equation (\ref{pbsq-cont}). 
Thus we have the following theorem. 
\begin{theorem}
Solutions of Eq.(\ref{pbsq1}), i.e. Eqs.(\ref{ivpbsq1}) and (\ref{ivpbsq2}), 
are expressed in the following form:
 \begin{eqnarray*}
&&u_n^m=\frac{G_{n}^{m}}{F_{n}^{m}}\,,\quad 
I_n^m=\frac{F_{n}^{m+1}F_{n-1}^{m-1}}{F_{n}^{m}F_{n-1}^m}
=p-q+\frac{G_{n-1}^{m+1}}{F_{n-1}^{m+1}}
-\frac{G_n^m}{F_n^m}\,,\\
&&V_n^m=p+2q+\frac{G_{n-1}^{m-2}}{F_{n-1}^{m-2}}
-\frac{G_n^m}{F_n^m}\,, 
\end{eqnarray*}
where $F_n^m$ and $G_n^m$ satisfy 
Eqs.(\ref{pbsqbi0}), (\ref{pbsqbi1}) and (\ref{pbsqbi2}). 
Moreover, $F_n^m$ is given by $\tau_n^m$ of the discrete potential 
BSQ equation (\ref{new-pbsq2}) in Theorem \ref{theorem1} 
when parameters have relations
\[
\frac{\alpha}{\delta}=\frac{a_3(a_1-a_2)}{a_1(a_3-a_2)}\,,\quad 
\frac{p^3-q^3}{\alpha^2\delta}= \frac{a_2(a_1-a_3)}{a_1(a_3-a_2)}\,.
\]
\end{theorem}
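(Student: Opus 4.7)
The plan is to establish the theorem in two stages: first, that $u_n^m = G_n^m/F_n^m$ together with the displayed expressions for $I_n^m$ and $V_n^m$ solves the reduced lattice potential BSQ system (\ref{ivpbsq1})--(\ref{ivpbsq2}) whenever $F_n^m$ and $G_n^m$ satisfy the bilinear system (\ref{pbsqbi0})--(\ref{pbsqbi2}); and second, that $F_n^m$ can be identified with the $\tau$-function of Theorem \ref{theorem1} under the stated parameter constraints.

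For the first stage I would essentially run the singularity-confinement derivation preceding the theorem in reverse. The Ansatz $u_n^m=\beta G_n^m/F_n^m$ combined with (\ref{IV-trans}) immediately gives the two stated formulas for $I_n^m$ and $V_n^m$. The consistency between the purely fractional expression $I_n^m=\alpha F_n^{m+1}F_{n-1}^{m-1}/(F_n^m F_{n-1}^m)$ and the $(G,F)$-expression $p-q+G_{n-1}^{m+1}/F_{n-1}^{m+1}-G_n^m/F_n^m$ is, after clearing denominators and shifting $m\mapsto m-1$, precisely Eq.(\ref{pbsqbi0}). A parallel computation starting from the SC-form (\ref{scform1}) and using the two representations of $V_n^m$ reduces to Eq.(\ref{pbsqbi1}). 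Once these two are in place, Eqs.(\ref{ivpbsq1})--(\ref{ivpbsq2}) are automatic because they are equivalent to (\ref{scform1})--(\ref{scform2}), which we have just matched.

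For the decoupled equation (\ref{pbsqbi2}), I would eliminate $G$ between appropriate shifts of (\ref{pbsqbi0}) and (\ref{pbsqbi1}): solving (\ref{pbsqbi0}) at $(n,m-1)$ and $(n+1,m+2)$ for the combinations of $G$ appearing in (\ref{pbsqbi1}), then substituting, should leave a pure-$F$ bilinear that factors as (\ref{pbsqbi2}), with $\delta$ arising as a decoupling constant of the elimination (analogous to the $\gamma$ that appeared in the SC analysis). This is the step I expect to require genuine algebraic care rather than mere substitution.

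For the identification with Theorem \ref{theorem1}, I would compare (\ref{pbsqbi2}) directly with the Hirota-Miwa $3$-reduction bilinear equation derived there, namely $a_1(a_2-a_3)\tau_{n+1}^{m+1}\tau_n^m+a_2(a_3-a_1)\tau_n^{m+2}\tau_{n+1}^{m-1}+a_3(a_1-a_2)\tau_n^{m-1}\tau_{n+1}^{m+2}=0$. Moving the first term to the right, dividing by $a_1(a_3-a_2)\tau_n^m\tau_{n+1}^{m+1}$ and performing the analogous normalization of (\ref{pbsqbi2}) by $\delta F_n^m F_{n+1}^{m+1}$, both equations take the form $c_1 X - c_2 Y = 1$ in the two monomial ratios $X=F_n^{m-1}F_{n+1}^{m+2}/(F_n^m F_{n+1}^{m+1})$ and $Y=F_n^{m+2}F_{n+1}^{m-1}/(F_n^m F_{n+1}^{m+1})$. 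Matching coefficients yields precisely $\alpha/\delta=a_3(a_1-a_2)/(a_1(a_3-a_2))$ and $(p^3-q^3)/(\alpha^2\delta)=a_2(a_1-a_3)/(a_1(a_3-a_2))$, whence the $F_n^m$ produced by the Casorati determinant formula of Theorem \ref{theorem1} furnishes a solution of (\ref{pbsqbi2}) and hence, via the transformations above, of the lattice potential BSQ equation. The main obstacle is the $G$-elimination in the middle step; everything else is substitution or direct coefficient matching.
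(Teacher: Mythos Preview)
Your proposal is correct and follows essentially the same route as the paper. The paper's argument for this theorem is the singularity-confinement computation immediately preceding the statement: substituting the ans\"atze (\ref{I-transform}) and (\ref{u-transform}) into (\ref{scform1})--(\ref{scform2}) and decoupling yields (\ref{bisa1})--(\ref{sabun}), which with $\gamma=0$ and the back-shift $m-n\to m$ become (\ref{pbsqbi0})--(\ref{pbsqbi2}); the paper then remarks, exactly as you do, that (\ref{pbsqbi2}) is obtainable by eliminating $G$ from (\ref{pbsqbi0}) and (\ref{pbsqbi1}), and the parameter identification with the Theorem~\ref{theorem1} bilinear is the coefficient match you wrote out explicitly. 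Your version simply runs the same algebra in the verification direction rather than the derivation direction.
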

Furthermore, we have the following theorem. 
\begin{theorem}
The above $\tau$-functions $F_n^m$ and $G_n^m$ are 
given by 
\begin{equation}
F_n^m=\tau_n^m, \quad G_n^m=\rho_n^m+
\left[
\left(\frac{p}{\beta}-\frac{1}{a_1}\right)n
+\left(\frac{q}{\beta}-\frac{1}{a_2}\right)m
\right]
\tau_n^m\,.
\end{equation}
$\tau_n^m$ is defined by Eq.(\ref{tau-def}) 
($\tau_n^m$ is also given in Theorem 3.1), 
$\rho_n^m$ is defined by Eq.(\ref{rho-def}).
Here parameters should be chosen as 
$a_3=-a_2/2$, 
\[
\alpha=\left(\frac{1}{a_1}-\frac{1}{a_2}\right)\beta\,,
\quad \delta=-\frac{3\beta}{a_2}\,,\quad 
\beta=\left[
\frac{(p^3-q^3)a_1^3a_2^3}{(2a_1+a_2)(a_1-a_2)^2}\right]^{\frac{1}{3}}\,. 
\]
\end{theorem}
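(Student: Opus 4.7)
The plan is to verify that the candidate $F_n^m = \tau_n^m$ and $G_n^m = \rho_n^m + L_n^m\,\tau_n^m$, where $L_n^m = (p/\beta - 1/a_1)n + (q/\beta - 1/a_2)m$, satisfies the bilinear system (\ref{pbsqbi0})--(\ref{pbsqbi2}) precisely when the parameters obey the stated relations.

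First I would fix $\alpha$, $\delta$, $\beta$ from (\ref{pbsqbi2}) alone. Substituting $F_n^m = \tau_n^m$ turns (\ref{pbsqbi2}) into the three-term bilinear identity $\alpha\tau_n^{m-1}\tau_{n+1}^{m+2} - (p^3-q^3)\alpha^{-2}\tau_n^{m+2}\tau_{n+1}^{m-1} = \delta\tau_n^m\tau_{n+1}^{m+1}$, which must coincide with the bilinear form $a_1(a_2-a_3)\tau_{n+1}^{m+1}\tau_n^m + a_2(a_3-a_1)\tau_n^{m+2}\tau_{n+1}^{m-1} + a_3(a_1-a_2)\tau_n^{m-1}\tau_{n+1}^{m+2}=0$ of the discrete potential BSQ equation established in Theorem \ref{theorem1}. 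Matching coefficients term by term reproduces the two ratio conditions of the previous theorem. Imposing in addition the reduction $a_3 = -a_2/2$---dictated by compatibility of the $(p-q)$ terms in (\ref{pbsqbi0})--(\ref{pbsqbi1}) with the same three-term identity---then solves this system uniquely and yields the closed forms $\alpha = (1/a_1 - 1/a_2)\beta$, $\delta = -3\beta/a_2$, and $\beta^3 = (p^3-q^3)a_1^3a_2^3/[(2a_1+a_2)(a_1-a_2)^2]$.

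Next I would identify $\rho_n^m$ with an auxiliary tau function arising from $\tau_n^m$ by a Miwa shift in the reduced Hirota--Miwa hierarchy---concretely, the Casorati determinant built from the same eigenfunctions $\psi_i$ appearing in the proof of Theorem \ref{theorem1} but with a unit shift in the auxiliary index $s$. By a Pl\"ucker relation among shifted Casorati determinants, $\rho_n^m$ and $\tau_n^m$ jointly satisfy a bilinear Miwa-shift identity whose lattice shifts match precisely those appearing in the $\rho\cdot\tau$ parts of (\ref{pbsqbi0}) and (\ref{pbsqbi1}). Substituting $G_n^m = \rho_n^m + L_n^m\tau_n^m$ into (\ref{pbsqbi0}) splits the equation into a $\rho\cdot\tau$ contribution---which is exactly this Miwa-shift identity---and a $\tau\cdot\tau$ contribution. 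Using the telescoping $L_{n-1}^{m-1} - L_n^{m-1} = -(p/\beta - 1/a_1)$ together with the three-term identity of Step 1 and the reduction $a_3 = -a_2/2$, the $\tau\cdot\tau$ part collapses to a single multiple of $\tau_n^m\tau_{n-1}^{m-1}$ with coefficient $\alpha$, which is the required right-hand side. The identical procedure applied to (\ref{pbsqbi1}) supplies the analogous slope condition on the $m$-direction, and the cubic formula for $\beta$ then emerges as the compatibility requirement between the $n$- and $m$-slopes.

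The main obstacle is Step 2: identifying $\rho_n^m$ and establishing the precise Miwa-shift bilinear identity with $\tau_n^m$ at the lattice shifts demanded by (\ref{pbsqbi0}) and (\ref{pbsqbi1}). This amounts to verifying a Pl\"ucker relation in the 3-reduced Hirota--Miwa hierarchy, and the delicate point is to check that after the further specialization $a_3=-a_2/2$ the coefficient structure produced by that identity matches the lattice potential BSQ form without degeneracy. Once this identity is in place, (\ref{pbsqbi2}) is simply the $G$-decoupled equation recorded in the previous theorem, and only the slope and cubic compatibility computations remain.
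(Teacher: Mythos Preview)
Your strategy is the paper's: verify (\ref{pbsqbi0})--(\ref{pbsqbi2}) by splitting $G=\rho+L\tau$ into a $\rho\cdot\tau$ piece handled by a Pl\"ucker identity and a $\tau\cdot\tau$ piece handled by telescoping $L$ together with the three-term bilinear (\ref{pbsqbi2}), then match coefficients to fix $\alpha,\delta,\beta$ and $a_3$. Two points deserve correction.

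First, your description of $\rho_n^m$ is wrong. It is \emph{not} $\tau$ with a global shift $s\mapsto s+1$ (that would give $|\mathbf 1,\dots,\mathbf N|$); it is the Casorati determinant $|\mathbf 0,\mathbf 1,\dots,\mathbf{N-2},\mathbf N|$ with only the last column advanced, equivalently $\rho_n^m=\partial_x\tau_n^m$ for an auxiliary continuous variable $x$ satisfying $\partial_x\psi_i=\psi_i(\,\cdot\,;s{+}1)$. The bilinear identities you need in Step~2 hold for this object, not for the globally shifted one, so this distinction is not cosmetic.

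Second, the paper's execution differs in that it derives the required $\rho\cdot\tau$ identities in the \emph{unreduced} three-variable Hirota--Miwa setting (with the third lattice direction $k$ still present), using explicit difference formulas for the shifted Casoratians, and only afterwards imposes the 3-reduction $\tau_n^m(k{+}1)\Bumpeq\tau_n^{m-2}(k)$. This produces the two identities
\[
g_n^{m+1}f_{n+1}^m-g_{n+1}^m f_n^{m+1}=\Bigl(\tfrac{1}{a_1}-\tfrac{1}{a_2}\Bigr)\bigl(f_{n+1}^{m+1}f_n^m-f_n^{m+1}f_{n+1}^m\bigr),
\qquad
g_n^{m-2}f_{n+1}^m-g_{n+1}^m f_n^{m-2}=\Bigl(\tfrac{1}{a_1}-\tfrac{1}{a_3}\Bigr)\bigl(f_{n+1}^{m-2}f_n^m-f_{n+1}^m f_n^{m-2}\bigr)
\]
directly, and makes transparent why $a_3=-a_2/2$ is forced: it is precisely what turns $1/a_1-1/a_3$ into $1/a_1+2/a_2$ so that the second identity matches the $(p+2q)$ structure of (\ref{pbsqbi1}) after the shift $G\mapsto G-(pn+qm)\beta^{-1}F$. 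Your plan to work entirely in the reduced two-variable setting is viable, but you then owe these two identities explicitly; and note that in verifying (\ref{pbsqbi0}) the three-term identity plays no role---only the telescoping of $L$ and the first $\rho\cdot\tau$ identity are used there.
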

\begin{proof}
Let us introduce the following $\tau$-functions using 
the Freeman-Nimmo notation \cite{FreemanNimmo,Ohta,MKO,OKMS}:
{\small
\begin{eqnarray}
&&\tau_n^m(k)=
\left|
\begin{array}{cccc}
\psi_1(n,m,k;s) & \psi_1(n,m,k;s+1) & \cdots & 
\psi_1(n,m,k;s+N-1)\\
\psi_2(n,m,k;s) & \psi_2(n,m,k;s+1) & \cdots & 
\psi_2(n,m,k;s+N-1)\\
\vdots & \vdots &\cdots &\vdots\\
\psi_N(n,m,k;s) & \psi_N(n,m,k;s+1) & \cdots & 
\psi_N(n,m,k;s+N-1)
\end{array}
\right|\nonumber\\
&&\quad \qquad =|\mathbf{0},\mathbf{1},\cdots,
 \mathbf{N-2},\mathbf{N-1}|\,,\label{tau-def}\\
&&\rho_n^m(k)=
\left|
\begin{array}{cccc}
\psi_1(n,m,k;s) & \cdots & \psi_1(n,m,k;s+N-2) & 
\psi_1(n,m,k;s+N)\\
\psi_2(n,m,k;s) & \cdots & \psi_2(n,m,k;s+N-2) &  
\psi_2(n,m,k;s+N)\\
\vdots & \cdots &\vdots &\vdots\\
\psi_N(n,m,k;s) & \cdots & \psi_N(n,m,k;s+N-2) & 
\psi_N(n,m,k;s+N)
\end{array}
\right|\nonumber\\
&&\quad \qquad =|\mathbf{0},\mathbf{1},\cdots,
 \mathbf{N-2},\mathbf{N}|
\,,\label{rho-def}
\end{eqnarray}
}
where $\psi_1,...,\psi_N$ are 
a set of linearly independent solutions of the linear system 
\begin{equation}
\Delta _{n}\psi_i(n,m,k;s)=\Delta _{m}\psi_i(n,m,k;s)
=\Delta _{k}\psi_i(n,m,k;s)=\psi_i(n,m,k;s+1)\,.\nonumber
\end{equation}
Here $\Delta_{n}, \Delta_{m}, \Delta_{k}$ 
are the backward difference operators
\begin{eqnarray*}
&&  \Delta _{n}f(n)\equiv \frac{f(n)-f(n-1)}{a_1}\,,
\quad \Delta _{m}f(m)\equiv \frac{f(m)-f(m-1)}{a_2}\,,
\quad \Delta _{k}f(k)\equiv \frac{f(k)-f(k-1)}{a_3}\,.
\end{eqnarray*}
Then we notice the following difference formulas \cite{Ohta,OKMS}
\begin{eqnarray*}
&&a_1\tau_{n+1}^m(k)=|\mathbf{0},\mathbf{1},\cdots, \mathbf{N-3}, 
\mathbf{N-2},\mathbf{N-2_{n+1}}|\,,\\
&&a_2\tau_{n}^{m+1}(k)=|\mathbf{0},\mathbf{1},\cdots, \mathbf{N-3}, 
\mathbf{N-2},\mathbf{N-2_{m+1}}|\,,\\
&&a_3\tau_{n}^{m}(k+1)=|\mathbf{0},\mathbf{1},\cdots, \mathbf{N-3}, 
\mathbf{N-2},\mathbf{N-2_{k+1}}|\,,\\
&&(a_1-a_2)\tau_{n+1}^{m+1}(k)=
|\mathbf{0},\mathbf{1},\cdots, \mathbf{N-3}, 
\mathbf{N-2}_{m+1},\mathbf{N-2_{n+1}}|\,,\\
&&(a_2-a_3)\tau_{n+1}^{m}(k+1)=
|\mathbf{0},\mathbf{1},\cdots, \mathbf{N-3}, 
\mathbf{N-2}_{k+1},\mathbf{N-2_{m+1}}|\,,\\
&&a_1\rho_{n+1}^m(k)-\tau_{n+1}^{m}(k)=
|\mathbf{0},\mathbf{1},\cdots, \mathbf{N-3}, 
\mathbf{N-1},\mathbf{N-2_{n+1}}|\,,\\
&&a_2\rho_{n}^{m+1}(k)-\tau_{n}^{m+1}(k)=
|\mathbf{0},\mathbf{1},\cdots, \mathbf{N-3}, 
\mathbf{N-1},\mathbf{N-2_{m+1}}|\,,\\
&&a_3\rho_{n}^{m}(k+1)-\tau_{n}^{m}(k+1)=
|\mathbf{0},\mathbf{1},\cdots, \mathbf{N-3}, 
\mathbf{N-1},\mathbf{N-2_{k+1}}|\,.
\end{eqnarray*}
From Pl\"ucker relations and these difference formulas, 
we obtain the following three bilinear equations
\begin{eqnarray}
&&\rho_n^{m+1}(k)\tau_{n+1}^m(k)
-\rho_{n+1}^{m}(k)\tau_{n}^{m+1}(k)\nonumber\\
&&\qquad =\left(\frac{1}{a_1}-\frac{1}{a_2}\right)
(\tau_{n+1}^{m+1}(k)\tau_n^m(k)-\tau_n^{m+1}(k)\tau_{n+1}^m(k))\,,
\\
&&\rho_n^{m}(k+1)\tau_{n+1}^m(k)
-\rho_{n+1}^{m}(k)\tau_{n}^{m}(k+1)\nonumber\\
&&\qquad 
=\left(\frac{1}{a_1}-\frac{1}{a_3}\right)
(\tau_{n+1}^{m}(k+1)\tau_n^m(k)-\tau_{n+1}^{m}(k)\tau_{n}^m(k+1))\,,
\\
&&a_1(a_2-a_3)\tau_{n+1}^m(k)\tau_{n}^{m+1}(k+1)
+a_2(a_3-a_1)\tau_{n}^{m+1}(k)\tau_{n+1}^{m}(k+1)\nonumber\\
&&\qquad 
+a_3(a_1-a_2)\tau_{n}^m(k+1)\tau_{n+1}^{m+1}(k)=0\,.
\end{eqnarray}
Applying the 3-reduction condition $\tau_n^m(k+1)
\Bumpeq
 \tau_n^{m-2}(k)$ and setting $f_n^m=\tau_n^m(k)$, 
$g_n^m=\rho_n^m(k)$, 
we obtain 
\begin{eqnarray}
&&g_n^{m+1}f_{n+1}^m
-g_{n+1}^{m}f_{n}^{m+1}
=\left(\frac{1}{a_1}-\frac{1}{a_2}\right)
(f_{n+1}^{m+1}f_n^m-f_n^{m+1}f_{n+1}^m)\,,\label{red-bsqbilinear0}
\\
&&g_n^{m-2}f_{n+1}^m
-g_{n+1}^{m}f_{n}^{m-2}
=\left(\frac{1}{a_1}-\frac{1}{a_3}\right)
(f_{n+1}^{m-2}f_n^m-f_{n+1}^{m}f_{n}^{m-2})\,,\label{red-bsqbilinear1}
\\
&&a_1(a_2-a_3)f_{n+1}^mf_{n}^{m-1}
+a_2(a_3-a_1)f_{n}^{m+1}f_{n+1}^{m-2}
+a_3(a_1-a_2)f_{n}^{m-2}f_{n+1}^{m+1}=0\,.\nonumber\\
\label{red-bsqbilinear2}
\end{eqnarray}
To compare these bilinear equations with bilinear equations 
(\ref{pbsqbi0})-(\ref{pbsqbi2}), 
we set 
\[
g_n^m=h_n^m+
\left(\frac{n}{a_1}+\frac{m}{a_2}\right)f_n^m\,,
\quad a_3=-\frac{a_2}{2}\,,\quad G_n^m=H_n^m+\frac{pn+qm}{\beta}F_n^m\,. 
\]
Then bilinear equations 
(\ref{red-bsqbilinear0})-(\ref{red-bsqbilinear2}) and 
(\ref{pbsqbi0})-(\ref{pbsqbi2}) 
are rewritten as 
\begin{eqnarray}
&&h_n^{m+1}f_{n+1}^m
-h_{n+1}^{m}f_{n}^{m+1}
=\left(\frac{1}{a_1}-\frac{1}{a_2}\right)
f_{n+1}^{m+1}f_n^m\,,\label{new-red-bsqbilinear0}
\\
&&h_n^{m-2}f_{n+1}^m
-h_{n+1}^{m}f_{n}^{m-2}
=\left(\frac{1}{a_1}+\frac{2}{a_2}\right)
f_{n+1}^{m-2}f_n^m\,,\label{new-red-bsqbilinear1}
\\
&&3a_1f_{n+1}^mf_{n}^{m-1}
-(2a_1+a_2)f_{n}^{m+1}f_{n+1}^{m-2}
-(a_1-a_2)f_{n}^{m-2}f_{n+1}^{m+1}=0\,,
\label{new-red-bsqbilinear2}
\end{eqnarray}
and
\begin{eqnarray}
&&\beta H_n^{m+1}F_{n+1}^m
-\beta H_{n+1}^{m}F_{n}^{m+1}
=\alpha F_{n+1}^{m+1}F_n^m\,,\label{new-lattice-bsqbilinear0}
\\
&&\beta H_n^{m-2}F_{n+1}^m
-\beta H_{n+1}^{m}F_{n}^{m-2}
=\frac{p^3-q^3}{\alpha^2}
F_{n+1}^{m-2}F_n^m\,,\label{new-lattice-bsqbilinear1}
\\
&&\delta F_{n+1}^mF_{n}^{m-1}
+\frac{p^3-q^3}{\alpha^2}F_{n}^{m+1}F_{n+1}^{m-2}
-\alpha F_{n}^{m-2}F_{n+1}^{m+1}=0\,.
\label{new-lattice-bsqbilinear2}
\end{eqnarray}
Assuming $F_n^m=f_n^m, H_n^m=h_n^m$ and 
comparing coefficients of these equations, 
we have the relations of parameters
\[
\alpha=\left(\frac{1}{a_1}-\frac{1}{a_2}\right)\beta\,,
\quad \delta=-\frac{3\beta}{a_2}\,,\quad 
\beta=
\left[
\frac{(p^3-q^3)a_1^3a_2^3}{(2a_1+a_2)(a_1-a_2)^2}\right]^{\frac{1}{3}}\,. 
\]
\end{proof}
Note that Hietarinta and Zhang gave the Casorati determinant 
form of the $\tau$-functions $F_n^m$ and $G_n^m$\cite{Hietarinta}. 
We also note that if we introduce the auxiliary independent variable
$x$ by $\partial_x \psi_i(n,m,k;s) = \psi_i(n,m,k;s+1)$, then
$\rho_n^m$ can be expressed as $\rho_n^m = \partial_x \tau_n^m$.
\quad \\

{\bf Remark 3:}\\
In the case of $p^3-q^3=0$, Eq.($\ref{bisa2}$) is
\begin{equation}
-\alpha (p+2q)F_{n+1}^{m+1}F_n^{m-1}
-\alpha \beta G_n^{m-1}F_{n+1}^{m+1}
+\alpha \beta F_n^{m-1}G_{n+1}^{m+1}
=\gamma F_n^mF_{n+1}^m\,.
\end{equation}
Equation ($\ref{sabun}$) is written as
\begin{eqnarray}
\alpha \frac{F_n^m F_{n-1}^{m-1}}
{F_n^{m-1}F_{n-1}^{m}}
+\frac{\gamma}{\alpha}
\frac{F_{n-1}^{m+1}F_{n}^{m+1}}
{F_{n}^{m+2}F_{n-1}^{m}}
=
\alpha \frac{F_{n+1}^{m+2} F_{n}^{m+1}}
{F_{n+1}^{m+1}F_{n}^{m+2}}
+\frac{\gamma}{\alpha}
\frac{F_{n}^{m}F_{n+1}^{m}}
{F_{n+1}^{m+1}F_{n}^{m-1}}\,.
\end{eqnarray}
After some calculation, we have
\begin{equation}
\frac{\alpha F_{n-1}^{m-1}F_{n+1}^{m+1}-\frac{\gamma}{\alpha}
F_{n+1}^{m}F_{n-1}^{m}}{F_{n}^{m-1}F_n^{m+1}}
=
\frac{\alpha F_{n+1}^{m+2}F_{n-1}^{m}-\frac{\gamma}{\alpha}
F_{n-1}^{m+1}F_{n+1}^{m+1}}{F_{n}^{m}F_n^{m+2}}\,.
\end{equation}
After decoupling, we obtain
\begin{equation}
\alpha F_{n-1}^{m-1}F_{n+1}^{m+1}-\frac{\gamma}{\alpha}
F_{n+1}^{m}F_{n-1}^{m}=\delta F_n^{m-1}F_n^{m+1}\,,
\end{equation}
which is nothing but Eq.(\ref{date-bilinear}). Thus we conclude that 
this special case gives a discrete potential BSQ equation 
which has the same $\tau$-function to Eq.(\ref{date-pbsq}).

\section{Conclusion}
We have proposed 
an alternate form of
discrete potential BSQ equation.
We have constructed the bilinear equations and multisoliton solutions
for the discrete potential BSQ equation.
The bilinear equations and multisoliton solutions have been constructed by
one of 3-reductions of the Hirota-Miwa equation. 
Using the discrete potential BSQ equation, we have presented 
the ultradiscrete potential BSQ equation. 
We have also studied the lattice potential 
Boussinesq equation of Nijhoff et al.
using the singularity confinement test. 
Although the lattice potential Boussinesq 
equation of Nijhoff et al. is in very complicated form,
we can find bilinear equations easily 
by using singularity confinement test. 
We have investigated the relationships among our 
alternate discrete potential BSQ equation, 
the discrete potential BSQ equation of Date et al. 
and the lattice BSQ equation by Nijhoff et al. 

An interesting problem is to present explicit forms of 
soliton solutions of the 
ultradiscrete potential BSQ equation. 
Since 3-reduction condition is very complicated, it is not easy 
to see which solutions can survive in the ultradiscrete limit. 
We will address this problem in the near future. 

\section*{Acknowledgement(s)}

The authors would like to thank
Professor Masayuki Oikawa for stimulating discussions. 
The authors would like to acknowledge helpful 
comments and suggestions of the referees.
The authors are grateful for
the hospitality of the Isaac Newton Institute for Mathematical
Sciences (INI) in Cambridge where this article
was completed during the programme Discrete Integrable Systems (DIS).
This work was partially supported by the JSPS Grant-in-Aid for
Scientific 
Research No.21656027 and No. 19340039.


\begin{thebibliography}{99}
\def\authors#1{\textsc{#1}}
\def\journal#1{\textit{#1}}
\small


\bibitem{Nijhoff}
F. W. Nijhoff, V. G. Papageorgiou, H. W. Capel and G. R. W. Quispel,
{\em The lattice Gel'fand-Dikii hierarchy}, 
Inverse Problems, {\bf 8} (1992), pp. 597--621.


\bibitem{Nijhoff2}
F. W. Nijhoff,
{\it Discrete Painlev\'e Equations and Symmetry Reductions 
on the Lattice},
in \textit{Discrete Integrable Geometry and Physics}
eds. A. Bobenko and R. Seiler 
(Clarendon Press, Oxford, 1999) pp. 209--234. 

\bibitem{bsq-paper}
J. Boussinesq,
{\em Theorie des ondes et des remous qui se propagent le long d'un 
canal rectangulaire horizontal, en comuniquant au liquide contene 
dans ce canal des vitesses sunsiblement parielles de la surface au fond},
J. Math. Pures. Appli. Ser. 2, 17 (1872), pp. 55--108.

\bibitem{ablowitz}
M. J. Ablowitz and H. Segur,
\textit{Solitons and the Inverse Scattering Transform}
(SIAM, Philadelphia, 1981). 

\bibitem{Tongas}
A. Tongas and F. Nijhoff,
{\em The Boussinesq integrable system: Compatible lattice and 
continuum structures},
Glasgow Math. J. 47A (2005), pp. 205--219.

\bibitem{Hietarinta}
J. Hietarinta and D. J. Zhang,
{\em Multisoliton solutions to the lattice Boussinesq equation}, 
arXiv:0906.3955 (2009).

\bibitem{kp}
B. B. Kadomtsev and V. I. Petviashvili,
{\em On stability of waves in weakly dispersive media},
Sov. Phys. Dokl. 15 (1970), pp. 539--541.

\bibitem{Freeman}
N. C. Freeman, 
{\em   Soliton interactions in two dimensions},
Adv. Appl. Mech. 20 (1980), pp. 1--37.
                         
\bibitem{FreemanNimmo}
N. C. Freeman and J. J. C. Nimmo,
{\em Soliton solutions of the KdV and KP equations: 
the Wronskian technique},
Phys. Lett. A, 95 (1983), pp. 1--3.

\bibitem{Hirota}
R. Hirota,  
{\it The direct method in soliton theory}, 
(Cambridge University Press, Cambridge, 2004).

\bibitem{satsuma}
J. Satsuma,
{\em N-Soliton Solution of the Two-Dimensional Korteweg-deVries Equation},
J. Phys. Soc. Jpn 40 (1976), pp. 286--290.

\bibitem{jpa2003}
G. Biondini and Y. Kodama,
{\em On a family of solutions of the Kadomtsev-Petviashvili equation 
which also satisfy the Toda lattice hierarchy},
J. Phys. A 36 (2003), pp. 10519--10536.

\bibitem{jpa2004}
Y. Kodama,
{\em Young diagrams and N-soliton solutions of the KP equation},
J. Phys. A 37 (2004), pp. 11169--11190.

\bibitem{jmp2006}
G. Biondini and S. Chakravarty,
{\em Soliton solutions of the Kadomtsev-Petviashvili II equation},
J. Math. Phys. 47 (2006), 033514.

\bibitem{mcs2007}
G. Biondini and S. Chakravarty,
{\em Elastic and inelastic line-soliton solutions 
of the Kadomtsev-Petviashvili II equation},
Math. Comp. Sim. 74 (2007), pp. 237--250.

\bibitem{prl2007}
G. Biondini,
{\em 
Line Soliton Interactions of the Kadomtsev-Petviashvili Equation
},
Phys. Rev. Lett. 99 (2007), 064103: 1--4.

\bibitem{jpa2008}
S. Chakravarty and Y. Kodama,
{\em Classification of the line-soliton solutions of KPII},
J. Phys. A 41 (2008), pp. 275209 (33pp).

\bibitem{Chak-Kodama}
S. Chakravarty and Y. Kodama,
{\em 
Soliton solutions of the KP equation and application to shallow water waves},
Stud. App. Math. 123 (2009), pp. 83--151.

\bibitem{boiti}
M. Boiti, F. Pempinelli, A. K. Pogrekov 
and B. Prinari,
{\em 
Towards an inverse scattering theory for non-decaying potentials 
of the heat equation},
Inverse Problems 17 (2001), pp. 937--957.
        
\bibitem{Date-reduction}
E. Date, M. Jimbo, M. Kashiwara and T. Miwa,
{\em Transformation groups for soliton equations. 
Euclidean Lie algebras and reduction of the KP hierarchy},
Publ. RIMS. Kyoto Univ. 18 (1982), pp. 1077--1110.

\bibitem{JM}
M. Jimbo and T. Miwa,
{\em Solitons and infinite dimensional Lie algebras},
Publ. RIMS. Kyoto Univ. 19 (1983), pp. 943--1001.

\bibitem{Hirota:reduction}
R. Hirota, 
{\em Reduction of soliton equations in bilinear form}, 
Physica D 18 (1986), pp. 161--170.

\bibitem{Miwa} 
T. Miwa,
{\em On Hirota's difference equations},
Proc. Japan Acad. A 58 (1982), pp. 9--12.

\bibitem{Ohta} 
Y. Ohta, R. Hirota, S. Tsujimoto and T. Imai,
{\em Casorati and Discrete Gram Type Determinant 
Representations of Solutions to the Discrete KP Hierarchy},
J. Phys. Soc. Jpn 62 (1993), pp. 1872--1886.

\bibitem{tokihiro} 
T. Tokihiro, D. Takahashi, J. Matsukidaira and J. Satsuma,
{\em From soliton equations to integrable cellular automata through 
a limiting procedure},
Phys. Rev. Lett. 76 (1996) pp. 3247--3250.

\bibitem{Date-bsq} 
E. Date, M. Jimbo and T. Miwa,
{\em Method for Generating Discrete Soliton Equation. III},
J. Phys. Soc. Jpn 52 (1993) pp. 388--393.

\bibitem{GRP:SC}
B. Grammaticos, A. Ramani and V. G. Papageorgiou, 
{\em Do integrable mappings have the Painlev\'e property?},
Phys. Rev. Lett. 67 (1991) pp. 1825--1828.

\bibitem{RGH:dP} 
A. Ramani, B. Grammaticos and J. Hietarinta, 
{\em Discrete versions of the Painlev\'e equations},
Phys. Rev. Lett. 67 (1991), pp. 1829--1832.

\bibitem{PGR:accel}
V. G. Papageorgiou, B. Grammaticos and A. Ramani,
{\em Integrable lattices and convergence acceleration algorithms},
Phys. Lett. A 179 (1993), pp. 111--115.

\bibitem{RGS:dP}
A. Ramani, B. Grammaticos and J. Satsuma, 
{\em Bilinear discrete Painlev\'e equations},
J. Phys. A: Math. Gen. 28 (1995), pp. 4655--4665.

\bibitem{MKNO}
K. Maruno, K. Kajiwara, S. Nakao and M. Oikawa,
{\em Bilinearization of discrete soliton equations and 
singularity confinement},
Phys. Lett. A 229 (1997), pp. 173--182.

\bibitem{MKO}
K. Maruno, K. Kajiwara and M. Oikawa, 
{\em Casorati determinant solution for the discrete-time 
relativistic Toda lattice equation},
Phys. Lett. A  241 (1998), pp. 335--343.


\bibitem{OKMS}
Y. Ohta, K. Kajiwara, J. Matsukidaira and J. Satsuma,
{\em Casorati determinant solution for the relativistic Toda 
lattice equation},
J. Math. Phys. 34 (1993), pp. 5190--5204.

\end{thebibliography}
\end{document}